\patchcmd{\section}{\scshape}{\bfseries\Large}{}{}
\renewcommand{\@secnumfont}{\bfseries}
\renewcommand{\tocsection}[3]{%
  \indentlabel{\@ifnotempty{#2}{\bfseries\ignorespaces#1 #2\quad}}\bfseries#3}
\renewcommand{\tocsubsection}[3]{%
  \indentlabel{\@ifnotempty{#2}{\ignorespaces#1 #2\quad}}#3}
\newcommand\@dotsep{4.5}
\def\@tocline#1#2#3#4#5#6#7{\relax
  \ifnum #1>\c@tocdepth 
  \else
    \par \addpenalty\@secpenalty\addvspace{#2}%
    \begingroup \hyphenpenalty\@M
    \@ifempty{#4}{%
      \@tempdima\csname r@tocindent\number#1\endcsname\relax
    }{%
      \@tempdima#4\relax
    }%
    \parindent\z@ \leftskip#3\relax \advance\leftskip\@tempdima\relax
    \rightskip\@pnumwidth plus1em \parfillskip-\@pnumwidth
    #5\leavevmode\hskip-\@tempdima{#6}\nobreak
    \leaders\hbox{$\m@th\mkern \@dotsep mu\hbox{.}\mkern \@dotsep mu$}\hfill
    \nobreak
    \hbox to\@pnumwidth{\@tocpagenum{\ifnum#1=1\bfseries\fi#7}}\par
    \nobreak
    \endgroup
  \fi}
\renewcommand\csname r@tocindent0\endcsname{0pt}
\def\l@subsection{\@tocline{2}{0pt}{2.5pc}{5pc}{}}
\newcommand{\R}{\mathbb{R}}
\newcommand{\Z}{\mathbb{Z}}
\newcommand{\T}{\mathbb{T}}
\newcommand{\E}{\mathcal{E}}
\newcommand{\bars}{\overline s}
\newcommand{\bu}{\bm{u}}
\newcommand{\bx}{\bm{x}}
\newcommand{\X}{\bm{X}}
\newcommand{\be}{\bm{e}}
\newcommand{\bv}{\bm{v}}
\newcommand{\p}{\partial}
\renewcommand{\div}{{\rm{div}\,}}
\newcommand{\SB}{{\rm SB}}
\newcommand{\abs}[1]{\left\lvert #1 \right\rvert}
\newcommand{\norm}[1]{\left\lVert #1 \right\rVert}
\newcommand{\wh}[1]{\widehat{#1}}
\newcommand{\mc}[1]{\mathcal{#1}}
\newtheorem{theorem}{Theorem}[section]
\newtheorem{lemma}[theorem]{Lemma}
\theoremstyle{definition}
\newtheorem{remark}[theorem]{Remark}
\begin{document}
\title{Remarks on regularized Stokeslets in slender body theory}

\author{Laurel Ohm}
\address{Courant Institute of Mathematical Sciences, New York University, New York, NY 10012}
\email{lo2083@cims.nyu.edu}



\begin{abstract} 
We remark on the use of regularized Stokeslets in the slender body theory (SBT) approximation of Stokes flow about a thin fiber of radius $\epsilon>0$. Denoting the regularization parameter by $\delta$, we consider regularized SBT based on the most common regularized Stokeslet plus a regularized doublet correction. Given sufficiently smooth force data along the filament, we derive $L^\infty$ bounds for the difference between regularized SBT and its classical counterpart in terms of $\delta$, $\epsilon$, and the force data. We show that the regularized and classical expressions for the velocity of the filament itself differ by a term proportional to $\log(\delta/\epsilon)$ -- in particular, $\delta=\epsilon$ is necessary to avoid an $O(1)$ discrepancy between the theories. However, the flow at the surface of the fiber differs by an expression proportional to $\log(1+\delta^2/\epsilon^2)$, and any choice of $\delta\propto\epsilon$ will result in an $O(1)$ discrepancy as $\epsilon\to 0$. Consequently, the flow around a slender fiber due to regularized SBT does not converge to the solution of the well-posed \emph{slender body PDE} which classical SBT is known to approximate. Numerics verify this $O(1)$ discrepancy but also indicate that the difference may have little impact in practice. 
\end{abstract}

\maketitle

\section{Introduction}
The method of regularized Stokeslets was introduced by Cortez in \cite{cortez2001method} to eliminate the need to integrate a singular kernel in boundary integral methods for Stokes flow. Since then, regularized Stokeslets have enjoyed widespread use in models of general fluid--structure interaction \cite{ainley2008method, cogan2008regularized,cortez2005method,cortez2015general,fauci2006biofluidmechanics,nguyen2014computing,sun2015boundary}.
The method of regularized Stokeslets has become especially popular for modeling the dynamics of thin fibers in a three dimensional fluid, providing an alternative way to deal with the singular integrals arising in the classical slender body theories of Lighthill \cite{lighthill1976flagellar}, Keller--Rubinow \cite{keller1976slender}, and Johnson \cite{johnson1980improved}. Slender body models based on regularized Stokeslets have found wide use in biophysical modeling applications \cite{bouzarth2011modeling2,bouzarth2011modeling1, buchmann2018mixing, buchmann2015flow, cisneros2010fluid, gillies2009hydrodynamic, lee2014nonlinear, olson2013modeling} and have led to advances in numerical methods \cite{gallagher2019sharp, gallagher2020passively, rostami2016kernel, smith2009boundary, smith2018nearest} as well as further modeling extensions \cite{cortez2018regularized, walker2019filament,walker2020regularised}. \\

In this paper we aim to compare slender body theory based on regularized Stokeslets \cite{cortez2012slender} to its classical counterpart \cite{gotz2000interactions,johnson1980improved,keller1976slender,lighthill1976flagellar,shelley2000stokesian,tornberg2004simulating}. This type of question has been addressed previously in various forms. Some authors have studied the convergence of a single regularized Stokeslet to a singular Stokeslet in the far field and as the regularization parameter $\delta\to 0$ \cite{mitchell2019there,nguyen2014reduction, zhao2019method}.
Others have considered regularized SBT specifically, as a line integral of either regularized Stokeslets \cite{bouzarth2011modeling1} or regularized Stokeslets plus doublet corrections \cite{cortez2012slender}, and have explored the relation to classical SBT in terms of both the fiber radius $\epsilon$ and the regularization parameter $\delta$. In particular, Cortez and Nicholas \cite{cortez2012slender} used a matched asymptotic expansion of the regularized SBT velocity field at the filament surface to compare the regularized fiber velocity to the classical Lighthill \cite{lighthill1976flagellar} and Keller--Rubinow \cite{keller1976slender} slender body theories. \\

Here we consider the regularized SBT expression used in Cortez--Nicholas \cite{cortez2012slender}, but rather than comparing an asymptotic expansion of regularized SBT to classical SBT, we compare the direct evaluation of the regularized expression along the fiber centerline, which is more like what is used in practice. We also consider the fluid flow around the slender body generated by regularized SBT and compare this to classical SBT. 
Recently, classical slender body theory has been shown to approximate the solution to a well-posed boundary value problem for Stokes flow around a slender 3D filament \cite{closed_loop,free_ends}, placing the classical approximation on rigorous theoretical footing. We aim to also consider regularized SBT within this framework. \\

We show the following results.
For the velocity of the fiber itself, while it is known that the regularization parameter $\delta$ should be chosen to be proportional to the fiber radius $\epsilon$ for highest accuracy, we show that in fact $\delta=\epsilon$ is necessary to achieve asymptotic agreement between regularized and classical SBT as $\epsilon\to 0$. 
However, the flow at the surface of the filament due to regularized SBT cannot be made to converge to the flow due to classical SBT for any choice of $\delta$ proportional to $\epsilon$. An $O(1)$ difference in the force-per-unit-length at the fiber surface is also noted.
As a consequence of this discrepancy, the flow around the filament due to regularized Stokeslets does not converge to the flow due to classical SBT as $\epsilon\to 0$. This discrepancy is perhaps not unexpected: since the leading order term of SBT blows up like $\log\epsilon$ as $\epsilon\to 0$, different methods designed to capture the leading order in $\epsilon$ may still differ by an $O(1)$ quantity. Part of the issue stems from balancing the effects of two small parameters, $\epsilon$ and $\delta$, and trying to relate them in a physically meaningful yet practically useful way. This type of issue does not arise for regularized Stokeslets in 2D or over surfaces in 3D. \\

Finally, we verify the above differences numerically and provide numerical evidence that the small magnitude and extent of the discrepancy between regularized and classical SBT may mean that their difference is more of a moral issue than a practical one.


\subsection{Classical slender body theory and the slender body PDE}\label{sec:geomANDsbt}
We begin with a precise definition of the slender body geometry considered here. We consider a closed, non-self-intersecting $C^2$ curve $\Gamma_0\in \R^3$ whose coordinates are given by $\X:\T = \R/\Z \to \R^3$, where $\X$ is parameterized by arclength $s$. The non-self-intersection requirement means that there exists a constant $c_\Gamma>0$ such that 
\begin{align*}
	\inf_{s\neq s'}\frac{\abs{\X(s)-\X(s')}}{\abs{s-s'}} \ge c_\Gamma.
\end{align*}

At each point along $\Gamma_0$ we define the unit tangent vector 
\begin{align*}
\be_{\rm s}(s) = \frac{d\X}{ds},
\end{align*}
and consider the orthonormal frame $(\be_{\rm s}(s),\be_{n_1}(s),\be_{n_2}(s))$ defined in \cite{closed_loop}, which satisfies the ODEs 
\begin{equation}\label{frameODE}
\frac{d}{ds} \begin{pmatrix}
\be_{\rm s}(s) \\
\be_{n_1}(s) \\
\be_{n_2}(s)
\end{pmatrix} =
\begin{pmatrix}
0 & \kappa_1(s) & \kappa_2(s) \\
-\kappa_1(s) & 0 & \kappa_3 \\
-\kappa_2(s) & \kappa_3 & 0
\end{pmatrix} \begin{pmatrix}
\be_{\rm s}(s) \\
\be_{n_1}(s) \\
\be_{n_2}(s)
\end{pmatrix}.
\end{equation}
Here $\kappa_3$, $\abs{\kappa_3}<\pi$, is a constant enforcing periodicity of the frame in $s$, and $\kappa_1(s)$ and $\kappa_2(s)$ are related to the curvature $\kappa(s)=|\frac{d^2\X}{ds}|$ of $\Gamma_0$ via
\begin{align*}
 \kappa(s) = \sqrt{\kappa_1^2(s) +\kappa_2^2(s)}.
 \end{align*} 
 We define 
 \begin{align*}
 \kappa_{\max}= \max_{s\in\T}\abs{\kappa(s)}.
 \end{align*}

We also define the following cylindrical unit vectors with respect to the orthonormal frame \eqref{frameODE}: 
\begin{equation}\label{erho}
\be_\rho(s,\theta) = \cos\theta\be_{n_1}(s) +\sin\theta\be_{n_2}(s), \quad
\be_\theta(s,\theta) = -\sin\theta\be_{n_1}(s) +\cos\theta\be_{n_2}(s). 
\end{equation}

We may then define the slender body $\Sigma_\epsilon$ and its surface $\Gamma_\epsilon$ as
\begin{align*}
\Sigma_\epsilon = \{ \bx\in\R^3 \; : \; \bx = \X(s) + \rho \be_\rho(s,\theta), \; s\in \T, \rho <\epsilon \}; \quad \Gamma_\epsilon = \p\Sigma_\epsilon = \X(s) +\epsilon \be_\rho(s,\theta).
\end{align*}

Along the slender body surface $\Gamma_\epsilon$ we define the Jacobian factor 
\begin{align}\label{Jeps}		
\mc{J}_\epsilon(s,\theta) = \abs{\frac{\p\Gamma_\epsilon}{\p s}\times \frac{\p\Gamma_\epsilon}{\p \theta}} &= \epsilon\big(1-\epsilon\wh\kappa(s,\theta) \big),\\
\wh\kappa(s,\theta) &:= \kappa_1(s)\cos\theta+\kappa_2(s)\sin\theta. \label{kappahat}
\end{align}		

We consider $\Sigma_\epsilon$ immersed in Stokes flow, where the fluid velocity $\bu(\bx)$ and pressure $p(\bx)$ satisfy 
\begin{align*}
-\mu\Delta \bu +\nabla p &= 0, \\
 \div \bu &= 0 \quad \text{in } \Omega_{\epsilon} = \R^3 \backslash \overline{\Sigma_{\epsilon}} .
\end{align*}
Here $\mu$ is the fluid viscosity, which for convenience we will henceforth scale to $\mu=1$. \\

The fundamental idea of slender body theory is to exploit the slenderness of the body to approximate the 3D fluid velocity about $\Sigma_\epsilon$ as the flow due to a 1D curve of point forces along $\Gamma_0$. The most basic form of SBT, due to figures such as Hancock \cite{hancock1953self}, Cox \cite{cox1970motion}, and Batchelor \cite{batchelor1970slender}, places a curve of \emph{Stokeslets}, the free-space Green's function for the Stokes equations, along the centerline of the filament. In $\R^3$, the Stokeslet is given by
\begin{equation}\label{stokeslet}
\mc{S}(\bx) = \frac{{\bf I}}{\abs{\bx}} + \frac{\bx\bx^{\rm T}}{|\bx|^3}.
\end{equation}

For improved accuracy, other authors, including Keller and Rubinow \cite{keller1976slender}, Lighthill \cite{lighthill1976flagellar}, and Johnson \cite{johnson1980improved}, include higher order corrections to the Stokeslet to account for the finite radius of the fiber. The natural choice of correction, often known as the \emph{doublet}, is given by the Laplacian of the Stokeslet: 
\begin{equation}\label{doublet}
\mc{D}(\bx) = \frac{1}{2}\Delta\mc{S}(\bx) =\frac{{\bf I}}{\abs{\bx}^3} - \frac{3\bx\bx^{\rm T}}{\abs{\bx}^5}.
\end{equation}

Combining the Stokeslet and doublet correction leads to the following approximation for the fluid velocity about the slender body: 
\begin{equation}\label{classicalSBT}
\bu^\SB(\bx) = \frac{1}{8\pi}\int_\T \bigg(\mc{S}(\bx-\X(s')) + \frac{\epsilon^2}{2}\mc{D}(\bx-\X(s')) \bigg) \bm{f}(s') \, ds'.
\end{equation}
The doublet coefficient $\frac{\epsilon^2}{2}$ is chosen to cancel (to leading order in $\epsilon$) the angular dependence of the velocity field \eqref{classicalSBT} at the fiber surface $\Gamma_\epsilon$, giving rise to an expression which to leading order depends only on arclength. Specifically, for $\bx=\X(s)+\epsilon\be_\rho(s,\theta)\in\Gamma_\epsilon$, expression \eqref{classicalSBT} satisfies \cite{closed_loop,free_ends}
\begin{equation}\label{fiber_integrity}
\abs{\frac{\p \bu^\SB}{\p\theta}(\bx)} \le C\epsilon\abs{\log\epsilon}\norm{\bm{f}}_{C^1(\T)}.
\end{equation}
This $\theta$-independence is sometimes known as the \emph{fiber integrity condition} \cite{closed_loop,free_ends} and ensures that cross sections of the slender body do not deform. In particular, the fiber integrity condition, enforced by including the doublet correction, serves to give the fiber a 3D structure that can be detected by the surrounding fluid velocity field.  
Throughout the paper we will refer to expression \eqref{classicalSBT} as \emph{classical} slender body theory. \\

One difficulty with the classical SBT expression \eqref{classicalSBT} is that the Stokeslet $\mc{S}$ and doublet $\mc{D}$ are singular at $\bx=\X(s)$, which raises the question of how to obtain a useful expression for the velocity of the fiber itself. \\

The most common method for obtaining a fiber velocity expression is to perform a matched asymptotic expansion of \eqref{classicalSBT} about $\epsilon=0$. The resulting expression, which we will denote by $\bu^\SB_{\rm{C}}(\X(s))$ to emphasize that the integral kernel is distinct from \eqref{classicalSBT}, is given by 
\begin{equation}\label{KR_SBT}
\begin{aligned}
8\pi\, \bu^{\SB}_{\rm C}(\X(s)) &= \big[({\bf I}- 3\be_{\rm s}\be_{\rm s}^{\rm T})-2({\bf I}+\be_{\rm s}\be_{\rm s}^{\rm T}) \log(\pi\epsilon/4) \big]{\bm f}(s) \\
&\qquad + \int_{\T} \left[ \left(\frac{{\bf I}}{|\overline{\X}|}+ \frac{\overline{\X}\overline{\X}^{\rm T}}{|\overline{\X}|^3}\right){\bm f}(s') - \frac{{\bf I}+\be_{\rm s}(s)\be_{\rm s}(s)^{\rm T} }{|\sin (\pi(s-s'))/\pi|} {\bm f}(s)\right] \, ds',
\end{aligned}
\end{equation}
where $\overline{\X}(s,s') = \X(s)-\X(s')$. We refer to \eqref{KR_SBT} as the Keller--Rubinow SBT, as Keller and Rubinow \cite{keller1976slender} were the first to arrive at (the non-periodic version of) the expression \eqref{KR_SBT}, although they did so by slightly different means. \\

However, one well-documented issue with Keller--Rubinow SBT is that the form of the integral operator in \eqref{KR_SBT} gives rise to high wavenumber instabilities \cite{gotz2000interactions,shelley2000stokesian,tornberg2004simulating,spectral_calc}. Specifically, the eigenvalues of \eqref{KR_SBT} are known to cross zero and switch signs at high frequency, which is nonphysical behavior and leads to noninvertibility of \eqref{KR_SBT}. \\

Various methods have been proposed to eliminate this high wavenumber instability in an asymptotically consistent way. One method, due to Shelley and Ueda \cite{shelley2000stokesian} and Tornberg and Shelley \cite{tornberg2004simulating}, involves inserting a parameter proportional to the fiber radius into the integrand of \eqref{KR_SBT} to regularize the integral kernel, leading to the fiber velocity expression
\begin{equation}\label{TS_SBT}
\begin{aligned}
8\pi\, \bu^{\SB}_{\rm C}(\X(s)) &= \big[({\bf I}- 3\be_{\rm s}\be_{\rm s}^{\rm T})-2({\bf I}+\be_{\rm s}\be_{\rm s}^{\rm T}) \log(\pi\epsilon/4) \big]{\bm f}(s) \\
&\quad + \int_{\T} \bigg[ \frac{{\bf I}+\wh{\overline{\X}}\wh{\overline{\X}}^{\rm T}}{\big(|\overline{\X}|^2+\eta^2\epsilon^2\big)^{1/2}}{\bm f}(s') - \frac{{\bf I}+\be_{\rm s}(s)\be_{\rm s}(s)^{\rm T} }{\big(|\sin (\pi(s-s'))/\pi|^2+\eta^2\epsilon^2\big)^{1/2}} {\bm f}(s)\bigg] \, ds'.
\end{aligned}
\end{equation}
Here $\wh{\overline{\X}}=\overline{\X}/|\overline{\X}|$ and $\eta>\sqrt{e}$ is a parameter chosen such that the operator in \eqref{TS_SBT} is (at least in simple geometries) positive definite. Note that via the integral identities
\begin{align*}
 \int_{-1/2}^{1/2}\bigg(\frac{1}{\abs{\sin(\pi s)/\pi}}-\frac{1}{\abs{s}}\bigg)\, ds = 2\log(4/\pi); \quad \bigg|\int_{-1/2}^{1/2}\frac{1}{\sqrt{s^2+\eta^2\epsilon^2}} ds +2\log(\eta\epsilon)\bigg|\le C\epsilon^2,
 \end{align*}
the second term in the integrand of \eqref{TS_SBT} may be integrated asymptotically to instead yield a local term with coefficient $+2\log(\eta)$ in place of $-2\log(\pi\epsilon/4)$.  \\

More recently, Andersson et al. \cite{norway} propose making use of the fiber integrity condition \eqref{fiber_integrity} along the fiber surface to simply eliminate all $\theta$-dependent terms and obtain the following expression for the fiber velocity:  
\begin{equation}\label{my_SBT}
\begin{aligned}
8\pi\bu^\SB_{\rm C}(\X(s)) &= 2\log(\eta)\, \bm{f}(s) + \int_\T \bigg[\frac{{\bf I}}{(|\overline{\X}|^2+\eta^2\epsilon^2)^{1/2}} + \frac{\overline{\X}\overline{\X}^{\rm T} }{(|\overline{\X}|^2+\epsilon^2)^{3/2}}  \\
&\qquad + \frac{\epsilon^2}{2}\bigg(\frac{{\bf I}}{(|\overline{\X}|^2+\epsilon^2)^{3/2} } - \frac{3\overline{\X}\overline{\X}^{\rm T} }{(|\overline{\X}|^2+\epsilon^2)^{5/2}}\bigg) \bigg] \bm{f}(s') \, ds'.
\end{aligned}
\end{equation}
Here the parameter $\eta\ge1$ can be used to convert \eqref{my_SBT} to a second-kind integral equation, if desired. \\

For $\bx=\X(s)+\epsilon\be_\rho(s,\theta)\in \Gamma_\epsilon$, each of \eqref{KR_SBT}, \eqref{TS_SBT}, and \eqref{my_SBT} satisfy \cite{closed_loop,free_ends,norway} 
\begin{equation}\label{center_est}
\abs{\bu^\SB_{\rm C}(\X(s))-\bu^\SB(\bx)} \le C\epsilon\abs{\log\epsilon} \norm{\bm{f}}_{C^1(\T)}. 
\end{equation}

Moreover, classical SBT is known to closely approximate the velocity field about a 3D filament given by the solution to a well-posed \emph{slender body PDE}, defined by Mori--Ohm--Spirn \cite{closed_loop,free_ends} as the following boundary value problem for the Stokes equations:
\begin{equation}\label{SB_PDE}
\begin{aligned}
-\Delta \bu +\nabla p &= 0, \; \div \bu = 0 \quad \text{in } \Omega_{\epsilon} = \R^3 \backslash \Sigma_{\epsilon}, \\
\int_0^{2\pi} (\bm{\sigma} {\bm n}) \, \mc{J}_{\epsilon}(s,\theta)\, d\theta &= {\bm f}(s) \hspace{1.5cm} \text{ on } \Gamma_{\epsilon}, \\
\bu\big|_{\Gamma_{\epsilon}} &= \bu(s) \hspace{1.5cm} \text{(unknown but independent of }\theta), \\
\abs{\bu} \to 0 & \text{ as } \abs{\bx}\to \infty.
\end{aligned}
\end{equation}
Here $\bm{\sigma}= (\nabla\bu+(\nabla\bu)^{\rm T}) - p{\bf I}$ is the fluid stress tensor, $\bm{n}(\bx)$ is the unit normal vector directed into $\Sigma_\epsilon$ at $\bx\in \Gamma_\epsilon$, and $\mc{J}_\epsilon$ is the Jacobian factor on $\Gamma_\epsilon$ defined in \eqref{Jeps}. In the slender body PDE formulation \eqref{SB_PDE}, the fiber integrity condition is strictly enforced as a partial Dirichlet boundary condition on the fiber surface. The prescribed 1D force density $\bm{f}(s)$ is defined to be the total surface stress over each cross section of the slender body. \\

One reason that the slender body PDE is a physically meaningful framework in which to place SBT is the following energy identity satisfied by solutions to \eqref{SB_PDE}:
\begin{equation}\label{energy_balance}
\int_{\Omega_\epsilon}2\abs{\E(\bu)}^2\, d\bx = \int_\T \bu(s) \bm{f}(s)\, ds, \qquad \E(\bu) = \frac{\nabla \bu + (\nabla \bu)^{\rm T} }{2}.
\end{equation}
Here the viscous dissipation (left hand side) balances the power (right hand side) exerted by the slender filament on the fluid. This identity arises naturally in describing a swimming body in Stokes flow \cite{lauga2020fluid}. \\

In \cite{closed_loop,free_ends}, it is shown that the difference between the velocity field $\bu^\SB(\bx)$ given by classical SBT \eqref{classicalSBT} and the solution $\bu(\bx)$ to \eqref{SB_PDE} satisfies 
\begin{equation}\label{SB_PDE_err1}
\norm{\bu^\SB - \bu}_{L^2(\Omega_\epsilon)} \le C\epsilon\abs{\log\epsilon}\norm{\bm{f}}_{C^1(\T)}.
\end{equation}

Furthermore, the fiber velocity expressions given by any of \eqref{KR_SBT}, \eqref{TS_SBT}, \eqref{my_SBT} satisfy 
\begin{equation}\label{SB_PDE_err2}
\norm{\bu^\SB_{\rm C} - \bu\big|_{\Gamma_\epsilon}}_{L^2(\T)} \le C\epsilon\abs{\log\epsilon}^{3/2}\norm{\bm{f}}_{C^1(\T)}.
\end{equation}

\subsection{Method of regularized Stokeslets}
A different approach to addressing the difficulties associated with the singular kernel arising in classical SBT and, more broadly, in boundary integral methods for Stokes flow, is the method of regularized Stokeslets, developed by Cortez in \cite{cortez2001method} with additional developments in \cite{cortez2005method,ainley2008method,cortez2012slender}. \\

Instead of singular Stokeslets \eqref{stokeslet}, the method relies on regularized Stokeslets, which are derived by solving the Stokes equations in $\R^3$ with a smooth right hand side forcing:
\begin{equation}\label{stokes_reg}
\begin{aligned}
-\Delta\bv+\nabla q &= \bm{f}\phi_\delta(\bx) \\
\div\bv&=0 .
\end{aligned}
\end{equation}
Here $\phi_\delta$ is an approximation to the identity, or \emph{blob function}, whose width is controlled by the parameter $\delta$, i.e.
\begin{equation}\label{phidelta}
\phi_\delta(\bx) = \frac{1}{\delta^3} \phi\bigg(\frac{\bx}{\delta} \bigg); \quad \int_{\R^3}\phi_\delta(\bx) \, d\bx = 1, \quad \lim_{\delta\to 0}\phi_\delta(\bx) = \bm{\delta}(\bx).
\end{equation}
Here we use boldface $\bm{\delta}$ to denote the Dirac delta rather than the parameter $\delta$. The blob function $\phi_\delta$ can be compactly supported or not, and is not required to be strictly positive \cite{nguyen2014reduction,zhao2019method}. For any such $\phi_\delta$, the solution to \eqref{stokes_reg} may be written as $\bv=\frac{1}{8\pi}\mc{S}_\delta(\bx)\bm{f}$, where $\mc{S}_\delta(\bx)$ is known as a regularized Stokeslet and is smooth everywhere in $\R^3$. \\

One of the most common choices of $\phi_\delta$ is
\begin{equation}\label{blob1}
\phi_\delta(\bx) = \frac{15\delta^4}{8\pi(|\bx|^2+\delta^2)^{7/2}},
\end{equation}
because this gives rise to a simple, easy-to-use regularized Stokeslet that looks very similar to the singular Stokeslet \eqref{stokeslet}:
\begin{equation}\label{reg_stokeslet}
\mc{S}_\delta(\bx) = \frac{{\bf I}}{(\abs{\bx}^2+\delta^2)^{1/2}} + \frac{\bx\bx^{\rm T}+\delta^2{\bf I}}{(\abs{\bx}^2+\delta^2)^{3/2}} .
\end{equation}

Many other choices of blob function are possible, including different power laws, exponentially decaying blobs, and compactly supported mollifiers \cite{cortez2015general,zhao2019method}.
The exponentially decaying and compactly supported blobs converge quickly to the singualr Stokeslet in the far field, away from $\bx=0$; however, we will not consider these here as they do not give rise to such a practically useful expression at $\bx=0$ (i.e. along the centerline of the fiber). 
Power law blob functions satisfying additional moment conditions may be constructed which lead to improved convergence to singular Stokeslets in the far field \cite{nguyen2014reduction,zhao2019method}. However, a similar result to Theorem \ref{thm:powerlaw} holds even for these blobs, as we will discuss further in Remark \ref{rem:moment_cond}. As such, for the remainder of this paper we will be considering in detail the regularized Stokeslet of the form \eqref{reg_stokeslet}.\\

Given that regularized Stokeslets are everywhere smooth by construction, it is natural to consider using them in slender body theory as a simple fix for the aforementioned difficulties arising due to the singular nature of \eqref{classicalSBT} along the fiber centerline. Various authors \cite{bouzarth2011modeling1, bouzarth2011modeling2, cortez2012slender, nguyen2014computing, walker2020regularised} have developed and studied versions of SBT based on regularized Stokeslets, although our approach differs in that we consider a regularized doublet correction as in \cite{cortez2012slender, walker2020regularised}, but also place our analysis within the context of the slender body PDE framework and consider the convergence of the flow about the slender body as $\epsilon\to 0$ as well.  \\

As mentioned, to best approximate the flow about $\Sigma_\epsilon$ using regularized Stokeslets, we will still need some form of higher order correction, since \eqref{reg_stokeslet} alone still gives rise to $O(1)$ angular dependence along the surface of the fiber. Thus we also need to make use of regularized doublets. In analogy to \eqref{doublet}, we will define a regularized doublet as the 1/2 Laplacian of a regularized Stokeslet. Note that this definition differs slightly from \cite{cortez2012slender,cortez2015general,nguyen2014computing}, where the doublet is defined as the negative Laplacian of a regularized Stokeslet. Here we use $+1/2$ rather than $-1$ for ease of comparison with classical SBT; note that due to the choice of doublet coefficient \eqref{doub_coeff} the final expression \eqref{MRS_SBT} for regularized SBT is exactly the same as in \cite{cortez2012slender,nguyen2014computing}. \\

It turns out that to best serve the purpose of asymptotically cancelling certain unwanted $O(1)$ terms arising from the regularized Stokeslet in SBT, it often makes sense to use a different blob function to derive the regularized doublet \cite{cortez2012slender,ainley2008method}. In particular, the regularized doublet corresponding to \eqref{reg_stokeslet} for the purposes of SBT is given by
\begin{equation}\label{reg_doublet}
\mc{D}_\delta(\bx) = \frac{{\bf I}}{(\abs{\bx}^2+\delta^2)^{3/2}} - 3\frac{\bx\bx^{\rm T}+\delta^2{\bf I}}{(\abs{\bx}^2+\delta^2)^{5/2}} ,
\end{equation}
which arises from the blob function 
\begin{align*}
\phi_\delta = \frac{3\delta^2}{4\pi (|\bx|^2+\delta^2)^{5/2}}.
\end{align*}

Again, the purpose of a doublet correction in SBT is to asymptotically cancel undesired $O(1)$ terms at the fiber surface which arise from integrating the regularized Stokeslet over the fiber centerline. In classical SBT, there is just one such undesired term, which arises due to the $\theta$-dependence of $\bx\bx^{\rm T}$ on $\Gamma_\epsilon$. In regularized SBT, different choices of regularized Stokeslets give rise to additional $O(1)$ terms which require cancellation. In the case of \eqref{reg_stokeslet}, for example, the additional $\delta^2{\bf I}$ term also requires cancellation. \\

In analogy to classical SBT \eqref{classicalSBT}, we may then consider the following approximation to the flow about the slender body $\Sigma_\epsilon$ using the regularized Stokeslet \eqref{reg_stokeslet} and regularized doublet \eqref{reg_doublet} \cite{cortez2012slender}:  
\begin{align}\label{MRS_SBT}
\bu^\delta(\bx) = \frac{1}{8\pi}\int_\T \bigg(\mc{S}_\delta(\bx-\X(s')) + \beta(\bx)&\mc{D}_\delta(\bx-\X(s')) \bigg) \bm{f}(s') \, ds',  \\
\label{doub_coeff}
\beta(\bx) &= \begin{cases}
\frac{\epsilon^2+\delta^2}{2}, & \bx\in \overline{\Omega}_\epsilon \\
\frac{\delta^2}{2}, & \bx=\X(s)\in \Gamma_0.
\end{cases}
\end{align}
Now the kernel of \eqref{MRS_SBT} is smooth along the fiber centerline; in particular, to obtain an expression for the velocity of the filament itself, we can simply evaluate \eqref{MRS_SBT} at $\bx=\X(s)$. Note that in order to achieve the desired cancellation properties with the regularized doublet, the doublet coefficient $\beta(\bx)$ must be different depending on whether we are considering points $\bx$ along the fiber centerline or in the bulk fluid, up to and including the surface of the actual 3D fiber. This will be discussed further in Remark \ref{rem:doublet}. \\

In addition to the velocity approximation \eqref{MRS_SBT}, a regularized approximation to the fluid pressure field about the slender body may also be defined. For the choice of blob function \eqref{blob1}, the regularized slender body pressure approximation is given by \cite{cortez2005method}
\begin{equation}\label{pressure_reg}
p^\delta(\bx) = \frac{1}{4\pi}\int_{\T}\bigg(\frac{\bm{R}\cdot\bm{f}(s')}{(|\bm{R}|^2+\delta^2)^{3/2}} + \frac{3\delta^2\bm{R}\cdot\bm{f}(s')}{2(|\bm{R}|^2+\delta^2)^{5/2}} \bigg)\,ds' . 
\end{equation}
Note that we have rewritten the pressure tensor from \cite{cortez2005method} to isolate the $\delta^2$ term for ease of comparison with the non-regularized counterpart. \\

 For a complete comparison between SBT based on regularized Stokeslets \eqref{MRS_SBT} and its classical counterpart \eqref{classicalSBT}, we will also consider the force-per-unit-length along the surface of the fiber due to the regularized Stokeslet velocity and pressure fields. For $s\in \T$, we define the force density $\bm{f}^\delta(s)$ analogously to \eqref{SB_PDE} as
\begin{equation}\label{reg_force}
\bm{f}^\delta(s) = \int_0^{2\pi}\bigg( 2\E(\bu^\delta)\bm{n} - p^\delta\bm{n} \bigg)\mc{J}_\epsilon(s,\theta)d\theta.
\end{equation}
Here $\bm{n}(s,\theta)$ is the unit vector normal to the fiber surface at $\bx(s,\theta)\in\Gamma_\epsilon$ which points into the slender body. Each of $\E(\cdot)$, $\bu^\delta$, $p^\delta$, and $\mc{J}_\epsilon$ are as defined in \eqref{energy_balance}, \eqref{MRS_SBT}, \eqref{pressure_reg}, and \eqref{Jeps}, respectively. \\

Note that the analogous force-per-unit-length due to classical SBT, which we denote $\bm{f}^\SB$, may be defined similarly to \eqref{reg_force} without issue, since the expressions are evaluated on the fiber surface rather than centerline and thus are not singular. From \cite{closed_loop}, we have that $\bm{f}^\SB$ satisfies 
\begin{equation}\label{fSB_est}
 \abs{\bm{f}(s) -\bm{f}^\SB(s)} \le C\epsilon\norm{\bm{f}}_{C^1(\T)}.
 \end{equation} 
The estimate \eqref{fSB_est} is an essential ingredient in proving the convergence estimates \eqref{SB_PDE_err1}, \eqref{SB_PDE_err2} to the slender body PDE solution, and therefore we are also interested in determining what type of estimate holds in the case of regularized Stokeslets. \\


\subsection{Statement of main result}
Our main result is the following.
\begin{theorem}\label{thm:powerlaw}
Consider a slender body $\Sigma_\epsilon$ satisfying the geometric constraints of Section \ref{sec:geomANDsbt}. Given a line force density $\bm{f}\in C^1(\T)$, let $\bu^\delta(\bx)$ be the slender body approximation \eqref{MRS_SBT} due to the method of regularized Stokeslets, using \eqref{reg_stokeslet} and \eqref{reg_doublet} as the regularized Stokeslet and doublet, respectively. Let $\bu^\SB(\bx)$ and $\bu^\SB_{\rm C}(\X(s))$ be the bulk and centerline velocity expressions, respectively, due to classical SBT (see \eqref{classicalSBT} and \eqref{KR_SBT}).\\

Along the fiber centerline $\X(s) = \bx\in\Gamma_0$, the difference between $\bu^\delta$ and $\bu^\SB_{\rm C}$ satisfies
\begin{equation}\label{centerline_est}
\abs{8\pi \big(\bu^\delta(\X(s)) - \bu^\SB_{\rm C}(\X(s)) \big) + 2\log\bigg(\frac{\delta}{\epsilon}\bigg)({\bf I}+\be_{\rm s}\be_{\rm s}^{\rm T})\bm{f}(s) } \le C(\epsilon\abs{\log\epsilon}+\delta\abs{\log\delta})\norm{\bm{f}}_{C^1(\T)}.
\end{equation}
In particular, if the regularization $\delta$ is proportional to $\epsilon$, then in fact $\delta\equiv\epsilon$ is necessary to eliminate an $O(1)$ discrepancy between the fiber velocity approximations.\\

However, for $\bx\in \Gamma_\epsilon$, along the actual surface of the fiber, the difference $\bu^\delta - \bu^\SB$ satisfies
\begin{equation}\label{bulk_est}
 \abs{8\pi \big(\bu^\delta(\bx) - \bu^\SB(\bx) \big) + \log\bigg(1+\frac{\delta^2}{\epsilon^2}\bigg)({\bf I}+\be_{\rm s}\be_{\rm s}^{\rm T}) \bm{f}(s)} \le C\sqrt{\epsilon^2+\delta^2}\log(\epsilon^2+\delta^2)\norm{\bm{f}}_{C^1(\T)}.
\end{equation}
Here there is an $O(1)$ discrepancy that cannot be eliminated by any choice of $\delta \propto \epsilon$. \\

Furthermore, along the fiber surface $\Gamma_\epsilon$, the regularized force density $\bm{f}^\delta$ \eqref{reg_force} satisfies
\begin{equation}\label{force_est}
\abs{\bm{f}^\delta(s) - \frac{\bm{f}(s)}{(1+\delta^2/\epsilon^2)^2} } \le C \sqrt{\epsilon^2+\delta^2}\norm{\bm{f}}_{C^1(\T)} .
\end{equation}

Finally, for any fluid point $\bx$ which is a distance $\epsilon^{1-\alpha}$, $0<\alpha<1$, away from the fiber centerline, if $\bx$ can be uniquely represented in the fiber frame \eqref{frameODE} (i.e. there exists unique closest point to $\bx$ on the fiber centerline), then
\begin{equation}\label{decay_est}
\abs{\bu^\delta(\bx)- \bu^\SB(\bx)} \le \frac{C\delta^2}{\epsilon^{2(1-\alpha)}}\norm{\bm{f}}_{C^0(\T)}.
\end{equation}
In particular, for $\delta\propto\epsilon$, we obtain $O(\epsilon^{2\alpha})$ convergence.

\end{theorem}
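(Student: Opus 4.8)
The plan is to handle all four estimates through a common local/far decomposition of the defining integrals, reducing the near-diagonal contributions to explicit one-dimensional model integrals whose dependence on the regularization scale produces the stated discrepancies. Fixing $s$ and writing $\xi = s - s'$, I would Taylor-expand the geometry about the diagonal: along the centerline $\overline{\X}(s,s') = \xi\be_{\rm s}(s) + O(\xi^2)$, so that $|\overline{\X}|^2 + \delta^2 \approx \xi^2 + \delta^2$, while on the surface $\bm{R} = \bx - \X(s') = \epsilon\be_\rho + \xi\be_{\rm s} + O(\xi^2,\epsilon\xi)$, so that $|\bm{R}|^2 + \delta^2 \approx \xi^2 + \epsilon^2 + \delta^2$. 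The workhorse integrals are $\int_{-1/2}^{1/2}(\xi^2+c^2)^{-1/2}\,d\xi = -\log c^2 + O(c^2)$ and $\int_{-1/2}^{1/2}\xi^2(\xi^2+c^2)^{-3/2}\,d\xi = -\log c^2 + O(1)$, which generate the ${\bf I}$ and $\be_{\rm s}\be_{\rm s}^{\rm T}$ logarithms, together with the bounded integrals $\int_{-1/2}^{1/2}(\xi^2+c^2)^{-3/2}\,d\xi \sim 2c^{-2}$ and $\int_{-1/2}^{1/2}(\xi^2+c^2)^{-5/2}\,d\xi \sim \tfrac{4}{3}c^{-4}$, which control the $\delta^2$-weighted Stokeslet and doublet pieces. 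Curvature remainders and the far region are controlled using arclength parameterization and the non-self-intersection constant $c_\Gamma$.

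For the centerline estimate \eqref{centerline_est} I would evaluate $8\pi\bu^\delta(\X(s))$ with the kernel $\mc{S}_\delta + \tfrac{\delta^2}{2}\mc{D}_\delta$ at scale $c = \delta$ and compare to the Keller--Rubinow form \eqref{KR_SBT}, writing both against the common subtracted model $({\bf I}+\be_{\rm s}\be_{\rm s}^{\rm T})\bm{f}(s)(|\sin(\pi\xi)/\pi|^2+\delta^2)^{-1/2}$. Using $\int_\T(|\sin(\pi\xi)/\pi|^2+\delta^2)^{-1/2}\,d\xi = -2\log\delta + 2\log(4/\pi) + O(\delta^2)$, the leading logarithmic coefficient of $\bu^\delta$ is $-2\log\delta\,({\bf I}+\be_{\rm s}\be_{\rm s}^{\rm T})$, and the periodic constant $2\log(4/\pi)$ cancels exactly against the $2\log(4/\pi)$ concealed in the $-2\log(\pi\epsilon/4)$ coefficient of \eqref{KR_SBT}, leaving precisely $-2\log(\delta/\epsilon)({\bf I}+\be_{\rm s}\be_{\rm s}^{\rm T})$. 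The remaining $\delta^2$-weighted pieces --- the $\delta^2{\bf I}$ in the regularized Stokeslet numerator and the doublet $\tfrac{\delta^2}{2}\mc{D}_\delta$ --- contribute only bounded constants, and a short computation with the $c^{-2}$ and $c^{-4}$ model integrals shows that these assemble to exactly $({\bf I}-3\be_{\rm s}\be_{\rm s}^{\rm T})\bm{f}(s)$, matching and cancelling the non-logarithmic local term of \eqref{KR_SBT}. The far-field and curvature remainders are $O(\epsilon|\log\epsilon| + \delta|\log\delta|)\norm{\bm{f}}_{C^1(\T)}$.

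For the surface estimate \eqref{bulk_est} I would instead form the difference $\bu^\delta - \bu^\SB$ directly, with combined integrand $(\mc{S}_\delta - \mc{S}) + \tfrac{\epsilon^2}{2}(\mc{D}_\delta - \mc{D}) + \tfrac{\delta^2}{2}\mc{D}_\delta$; since the two kernels share the same far-field and periodic structure, the leading contribution is local, at scale $c = \sqrt{\epsilon^2+\delta^2}$ (regularized) versus $c = \epsilon$ (classical). The ${\bf I}$ and $\xi^2\be_{\rm s}\be_{\rm s}^{\rm T}$ channels then each produce $-\log(\epsilon^2+\delta^2) - (-\log\epsilon^2) = -\log(1+\delta^2/\epsilon^2)$, yielding the stated $({\bf I}+\be_{\rm s}\be_{\rm s}^{\rm T})$ coefficient, while the cross terms $\epsilon\xi(\be_{\rm s}\be_\rho^{\rm T}+\be_\rho\be_{\rm s}^{\rm T})$ are odd in $\xi$ and integrate to zero. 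The delicate point is the $\theta$-dependent $\epsilon^2\be_\rho\be_\rho^{\rm T}$ channel and the non-logarithmic ${\bf I}$ constants: using the bulk doublet coefficient $\tfrac{\epsilon^2+\delta^2}{2}$ together with the $c^{-2}$, $c^{-4}$ model integrals, the $\be_\rho\be_\rho^{\rm T}$ coefficient collapses (its three kernel contributions summing to $2 - 2(\epsilon^2+\delta^2)/(\epsilon^2+\delta^2) = 0$) and the ${\bf I}$ constants likewise telescope to zero, so only the clean logarithm survives, modulo $O(\sqrt{\epsilon^2+\delta^2}\log(\epsilon^2+\delta^2))$ curvature corrections. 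For the force estimate \eqref{force_est} I would differentiate the regularized kernels to extract the leading near-diagonal parts of $\nabla\bu^\delta$ and $p^\delta$ on $\Gamma_\epsilon$ (whose $\be_\rho$-directional parts scale like $\epsilon(\epsilon^2+\delta^2)^{-1}$ up to $\delta^2$-corrections), assemble the stress $\bm{\sigma} = \nabla\bu^\delta + (\nabla\bu^\delta)^{\rm T} - p^\delta{\bf I}$, contract with $\bm{n} = -\be_\rho$, multiply by $\mc{J}_\epsilon = \epsilon(1-\epsilon\wh\kappa)$, and integrate in $\theta$ using $\int_0^{2\pi}\be_\rho\be_\rho^{\rm T}\,d\theta = \pi({\bf I}-\be_{\rm s}\be_{\rm s}^{\rm T})$; the pressure and viscous contributions then combine into the scalar factor $\epsilon^4/(\epsilon^2+\delta^2)^2 = (1+\delta^2/\epsilon^2)^{-2}$ multiplying $\bm{f}(s)$, with $O(\sqrt{\epsilon^2+\delta^2})$ remainder.

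The decay estimate \eqref{decay_est} is the most direct: at distance $d = \epsilon^{1-\alpha}$ from the centerline the three kernel differences are pointwise $O(\delta^2|\bm{R}|^{-3})$, $O(\epsilon^2\delta^2|\bm{R}|^{-5})$, and $O(\delta^2|\bm{R}|^{-3})$ with $|\bm{R}| \gtrsim d$, so integrating against the near-diagonal scale (where $\int(\xi^2+d^2)^{-3/2}\,d\xi \sim 2d^{-2}$ and $\epsilon^2 \le d^2$) gives $O(\delta^2/d^2) = O(\delta^2\epsilon^{-2(1-\alpha)})$; the unique-nearest-point hypothesis guarantees that the local frame expansion of $\bm{R}$ is valid. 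The \emph{main obstacle} is the $O(1)$-constant bookkeeping in the surface estimate \eqref{bulk_est}: unlike the leading logarithms, which emerge immediately, the survival of a clean $\theta$-independent $({\bf I}+\be_{\rm s}\be_{\rm s}^{\rm T})$ discrepancy requires that the $\be_\rho\be_\rho^{\rm T}$ angular terms and all non-logarithmic ${\bf I}$ and $\be_{\rm s}\be_{\rm s}^{\rm T}$ constants cancel exactly --- a cancellation engineered by the bulk doublet coefficient $\tfrac{\epsilon^2+\delta^2}{2}$ (the regularized analogue of the fiber integrity condition \eqref{fiber_integrity}) acting against the extra $\delta^2{\bf I}$ term of the regularized Stokeslet. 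Verifying these cancellations at the level of constants, and carrying out the parallel stress-tensor computation for \eqref{force_est}, is where the bulk of the careful but ultimately routine calculation lies.
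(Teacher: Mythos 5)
Your proposal is correct in substance and, for the surface estimate \eqref{bulk_est} and the decay estimate \eqref{decay_est}, follows essentially the same route as the paper: your ``workhorse integrals'' are exactly the constants $d_{03}=2$, $d_{05}=\tfrac{4}{3}$, $d_{25}=\tfrac{2}{3}$ of the paper's Lemma \ref{CPAM_lemmas}, your pointwise kernel-difference bound is the paper's inequality \eqref{Rdiff}, and your pairwise cancellations (the extra $\delta^2{\bf I}$ Stokeslet term against the $\delta^2$-weighted doublet piece, and the $\epsilon^2\be_\rho\be_\rho^{\rm T}$ channel of $\mc{S}_\delta-\mc{S}$ against that of the doublet difference) are precisely the ones the paper records as $S_3+D_3$ and $S_2+D_2$. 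Two of your choices genuinely differ. For the centerline estimate \eqref{centerline_est} you compare directly with the Keller--Rubinow form \eqref{KR_SBT} via the subtraction $({\bf I}+\be_{\rm s}\be_{\rm s}^{\rm T})\bm{f}(s)\big(|\sin(\pi\xi)/\pi|^2+\delta^2\big)^{-1/2}$, which forces you to verify by hand that the non-logarithmic $O(1)$ constants of $\mc{S}_\delta+\tfrac{\delta^2}{2}\mc{D}_\delta$ assemble to $({\bf I}-3\be_{\rm s}\be_{\rm s}^{\rm T})\bm{f}(s)$; this is indeed correct (the ${\bf I}$ contributions sum to $2+1-2=1$ and the $\be_{\rm s}\be_{\rm s}^{\rm T}$ contributions to $-2-1=-3$), and it buys a self-contained argument. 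The paper instead compares with the structurally parallel expression \eqref{my_SBT} at $\eta=1$, whose denominators $(|\overline{\X}|^2+\epsilon^2)^{n/2}$ mirror the regularized ones with $\delta\mapsto\epsilon$, and invokes the known estimate \eqref{center_est} from \cite{norway}; all six differences $I_k$, $J_k$ are then dispatched by the same lemma, and the Keller--Rubinow constant-matching never has to be redone. Similarly, for \eqref{force_est} you propose a direct assembly of the regularized stress on $\Gamma_\epsilon$, whereas the paper isolates the three $\delta^2$-specific terms into $\bm{F}^\delta$ \eqref{Fdelta}, estimates the remainder by repeating \cite{closed_loop}, Section 3.3 with the $\delta$-modified lemmas (yielding the factor $(1+\delta^2/\epsilon^2)^{-1}$ in \eqref{old_force}), and computes $\bm{F}^\delta\approx-\tfrac{\delta^2/\epsilon^2}{(1+\delta^2/\epsilon^2)^2}\bm{f}$ in Lemma \ref{lem:Fdelta}; the two routes agree through the identity $\tfrac{1}{1+x}-\tfrac{x}{(1+x)^2}=\tfrac{1}{(1+x)^2}$ with $x=\delta^2/\epsilon^2$, but the paper's splitting lets it reuse an existing computation where your version must redo the full stress bookkeeping.

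One imprecision to repair: the cross terms $\bars\,\epsilon(\be_{\rm s}\be_\rho^{\rm T}+\be_\rho\be_{\rm s}^{\rm T})$ do \emph{not} ``integrate to zero.'' The integrand is not odd in $\bars$: the density $\bm{f}(s-\bars)$ is not even, and $|\bm{R}|^2=\bars^2+\epsilon^2+2\epsilon\bars^2(\bm{Q}\cdot\be_\rho)+2\bars^3(\bm{Q}\cdot\be_{\rm s})+\bars^4|\bm{Q}|^2$ carries odd-in-$\bars$ curvature contributions. What the near-odd structure actually provides is the refinement of the paper's estimate \eqref{lemma3_4}, which trades the oddness for one extra power of smallness against $\norm{\bm{f}}_{C^1(\T)}$, giving the bound $C\delta^2\epsilon(\epsilon^2+\delta^2)^{-1}\norm{\bm{f}}_{C^1(\T)}$ for these terms. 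Since this is within your stated error budget $O\big(\sqrt{\epsilon^2+\delta^2}\,\log(\epsilon^2+\delta^2)\big)$, nothing in the argument breaks, but the exact-vanishing claim as written is false and should be replaced by the odd-moment estimate.
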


The main takeaways of Theorem \ref{thm:powerlaw} can be summarized estimate-by-estimate as follows. \\

1. (\emph{Estimate \eqref{centerline_est}}) The regularized Stokeslet approximation $\bu^\delta(\X(s))$ for the velocity of the fiber itself can be made to agree asymptotically with the classical Keller--Rubinow expression \eqref{KR_SBT} (and hence the fiber velocity given by the slender body PDE as well, due to estimate \eqref{SB_PDE_err2}), but only for the choice of regularization parameter $\delta=\epsilon$. In particular, choosing $\delta=c\epsilon$, $c\neq 1$, leads to an $O(1)$ discrepancy between classical SBT and the method of regularized Stokeslets as $\epsilon\to 0$ -- specifically, a difference of the form $\frac{1}{4\pi}\log(c)({\bf I}+\be_{\rm s}\be_{\rm s}^{\rm T})\bm{f}(s)$. The necessity of choosing $\delta=\epsilon$ here stems from the fact that the leading order behavior of SBT scales like $\log\epsilon$ and is thus very sensitive to small perturbations. \\

This behavior may be contrasted with that of the parameter $\eta$ in the Tornberg--Shelley formulation \eqref{TS_SBT} and also in Andersson \cite{norway}. Note that in both formulations, $\eta$ may be taken to be larger than 1 (in fact, $\eta>\sqrt{e}$ is used in \eqref{TS_SBT}) and both velocity expressions still converge to the slender body PDE solution \eqref{SB_PDE} by \eqref{SB_PDE_err2}. This is because the effect of $\eta>1$ is cancelled by the local $\log(\eta)$ term in \eqref{my_SBT} or by the $\eta$ in the denominator of the second term of the integrand in \eqref{TS_SBT} (note that this term can be integrated asymptotically to yield a similar local $\log(\eta)$ term). In particular, the leading order effect of inserting the parameter $\eta$ is cancelled, which preserves tha asymptotic consistency with Keller--Rubinow SBT \eqref{KR_SBT} for values of $\eta$ besides $\eta=1$. \\

2. (\emph{Estimate \eqref{bulk_est}}) On the other hand, at the surface $\Gamma_\epsilon$ of the slender body, there is always an $O(1)$ discrepancy between classical SBT and the method of regularized Stokeslets for any choice of $\delta$ proportional to $\epsilon$ of the form $\frac{1}{8\pi}\log\big(1+\frac{\delta^2}{\epsilon^2}\big)({\bf I}+\be_{\rm s}\be_{\rm s}^{\rm T}) \bm{f}(s)$. In particular, due to takeaway $\#1$,  the fiber velocity $\bu^\delta(\X(s))$ actually differs from the velocity at the fiber surface by an $O(1)$ quantity, i.e. there is no sense of 3D structure to the filament, and the slender body PDE assumption \eqref{SB_PDE} of a constant velocity across fiber cross sections is not even approximately satisfied. \\

Again, since the leading order term of SBT blows up like $\log\epsilon$ for $\epsilon$ small, other methods which capture this leading order behavior may still differ by an $O(1)$ quantity. The presence of an additional $\delta^2$ in each denominator of \eqref{reg_stokeslet}, \eqref{reg_doublet} when evaluating on the fiber surface $\Gamma_\epsilon$ (i.e. the denominators look like $|\overline\X|^2+\epsilon^2+\delta^2$) means that the surrounding fluid behaves as if the fiber actually has a thickness of $\sqrt{\epsilon^2+\delta^2}$ rather than $\epsilon$. Because of the logarithmic leading order behavior (coming from integrating $1/(|\overline\X|^2+\epsilon^2+\delta^2)^{1/2}$ along the centerline), this is not a small perturbation even when $\delta\propto\epsilon$. Note that this discrepancy is also apparent in the asymptotic analysis in Cortez--Nicholas \cite{cortez2012slender}, where the authors perform an asymptotic expansion of \eqref{MRS_SBT}, evaluated at the fiber surface $\Gamma_\epsilon$, about $\epsilon=0$ and obtain the same logarithmic disagreement with classical SBT \eqref{KR_SBT} in the local term. Note that the claim following \cite{cortez2012slender}, equation (23), that this disagreement can be cancelled for a periodic fiber with a particular choice of $\delta\sim \epsilon$ is not true, as the resulting expression does not agree with the \emph{periodic} Keller--Rubinow expression \eqref{KR_SBT}, where the local logarithmic coefficient should be of the form $-2\log(\pi\epsilon/4)$. This discrepancy can, however, be avoided by using the centerline evaluation of \eqref{MRS_SBT} as the fiber velocity instead. \\

3. (\emph{Estimate \eqref{force_est}}) The nonconvergence of regularized SBT to both classical SBT and the slender body PDE as $\epsilon\to0$ is further manifested in the force estimate \eqref{force_est} along $\Gamma_\epsilon$. Instead of recovering the prescribed force density $\bm{f}(s)$ at the fiber surface as in \eqref{fSB_est}, the method of regularized Stokeslets spreads the prescribed force density $\bm{f}$ into the surrounding fluid domain, giving rise to an additional body force in the fluid \cite{zhao2019method}. If $\delta=\epsilon$, for example, then $1/4$ of $\bm{f}$ is recovered at the surface of the filament, while the rest of the forcing is spread into the fluid bulk. In particular, SBT based on the method of regularized Stokeslets does not (asymptotically) satisfy the energy identity \eqref{energy_balance} associated with swimming in Stokes flow. \\

4. (\emph{Estimate \eqref{decay_est}}) Despite the $O(1)$ discrepancy at the fiber surface, due to \eqref{decay_est} we have that the difference in velocity expressions at any fluid point $\bx$ near but not on the surface of the fiber actually does converge to 0 as $\epsilon\to 0$. The rate of convergence (in $\epsilon$) goes to zero as $\bx$ approaches the fiber surface. Due to this decay estimate \eqref{decay_est} and to the small magnitude of the $O(1)$ difference $\bu^\delta-\bu^\SB$ at the fiber surface (note that when $\delta=\epsilon$, for example, this difference is $\frac{1}{4\pi}\log(2)\approx 0.05516$), the discrepancy between the flows due to regularized and classical SBT may not have a qualitatively noticeable effect in practice.
In particular, numerical demonstrations in Section \ref{sec:numerics} indicate that even when multiple fibers interact hydrodynamically, the discrepancy between regularized and classical SBT may not be large enough to produce significantly different velocity fields around the fibers. \\

\begin{remark}[Doublet coefficient]\label{rem:doublet}
We would like to address the different choice of doublet coefficient for the bulk versus centerline expressions in regularized SBT \eqref{MRS_SBT}. If the same coefficient is used for both expressions, then there will be an additional $O(1)$ quantity appearing in either the centerline or bulk expression (depending on which coefficient is used) that is not cancelled asymptotically. For example, if the centerline doublet coefficient $\frac{\delta^2}{2}$ is used in the bulk, then a term
\[ -\frac{\epsilon^2}{2}\int_\T\mc{D}_\delta(\bx-\X(s')) \, \bm{f}(s')\, ds' \]
must be added to the $\bu^\delta(\bx)$ given in \eqref{MRS_SBT}.
Using Lemma \ref{CPAM_lemmas}, which will be introduced in Section \ref{sec:proof}, this term can be shown to contribute an $O(1)$ term of the form 
\[\frac{1}{(\epsilon^2+\delta^2)^2}\big( -(\epsilon^4-\epsilon^2\delta^2){\bf I}+(\epsilon^4+\epsilon^2\delta^2)\be_{\rm s}\be_{\rm s}^{\rm T}+2\epsilon^4\be_\rho\be_\rho^{\rm T}\big)\bm{f}(s)\]
 at the fiber surface. Although the ${\bf I}$ term can be eliminated by choosing $\delta=\epsilon$, this term still gives rise to an even greater discrepancy than \eqref{bulk_est} with classical SBT. Similarly, if the bulk coefficient $\frac{\delta^2+\epsilon^2}{2}$ is used instead along the fiber centerline, then an additional term
\[ \frac{\epsilon^2}{2}\int_\T\mc{D}_\delta(\overline{\X}(s,s')) \, \bm{f}(s')\, ds' \]
is added to the centerline expression \eqref{MRS_SBT}, which can be shown (again via Lemma \ref{CPAM_lemmas} in Section \ref{sec:proof}) to have an $O(1)$ contribution 
\[\frac{1}{(\epsilon^2+\delta^2)^2}\big( (\epsilon^4-\epsilon^2\delta^2){\bf I}-(\epsilon^4+\epsilon^2\delta^2)\be_{\rm s}\be_{\rm s}^{\rm T}\big)\bm{f}(s) \] 
to the fiber velocity. Again, only the ${\bf I}$ term may be eliminated by choosing $\delta=\epsilon$. Note that the need to choose different doublet coefficients removes one of the apparent benefits of regularized SBT, however; namely, the same kernel cannot be used to describe both the fiber velocity and the flow around it.   
 \end{remark}

\begin{remark}[Other blob functions]\label{rem:moment_cond}
The results of Theorem \ref{thm:powerlaw} hold even for power law blobs satisfying additional moment conditions which lead to better convergence (in terms of $\delta$) to singular Stokeslets in the far field. For example, we may consider the blob 
\begin{equation}\label{blob2}
\phi_\delta(\bx) = \frac{15\delta^4(5\delta^2-2\abs{\bx}^2)}{16\pi(\abs{\bx}^2+\delta^2)^{9/2}},
\end{equation}
which gives rise to the regularized Stokeslet 
\begin{equation}\label{reg_stokeslet2}
\mc{S}_\delta(\bx) = \frac{{\bf I}}{(\abs{\bx}^2+\delta^2)^{1/2}} + \frac{\bx\bx^{\rm T}}{(\abs{\bx}^2+\delta^2)^{3/2}} + \frac{4\delta^4{\bf I}+\delta^2\abs{\bx}^2{\bf I}+3\delta^2\bx\bx^{\rm T}}{2(\abs{\bx}^2+\delta^2)^{5/2}}.
\end{equation}

The regularized doublet corresponding to \eqref{reg_stokeslet2} (in the sense of the cancellation properties discussed following equation \eqref{reg_doublet}) is given by 
\begin{equation}\label{reg_doublet2}
\mc{D}_\delta(\bx) = \frac{{\bf I}}{(\abs{\bx}^2+\delta^2)^{3/2}} - \frac{3\bx\bx^{\rm T}}{(\abs{\bx}^2+\delta^2)^{3/2}} - \frac{12\delta^4{\bf I}-3\delta^2\abs{\bx}^2{\bf I}+15\delta^2\bx\bx^{\rm T}}{2(\abs{\bx}^2+\delta^2)^{7/2}},
\end{equation}
and arises from the blob function
\begin{align*}
\phi_\delta(\bx) = \frac{15\delta^4}{8\pi(|\bx|^2+\delta^2)^{7/2}}.
\end{align*}
Note that we are again defining the regularized doublet as the $+1/2$ Laplacian of the regularized Stokeslet, rather than the $-1$ Laplacian.
Both of the above blobs are constructed to have a vanishing second moment, i.e. 
\[\int_0^\infty r^4\phi_\delta(r)\, dr=0, \]
 which has been shown to yield regularized Stokeslets and doublets with improved convergence to their non-regularized counterparts as $\bx\to\infty$ \cite{nguyen2014reduction,zhao2019method}. 
We may consider using \eqref{reg_stokeslet2} and \eqref{reg_doublet2} in the regularized SBT expression \eqref{MRS_SBT} in place of \eqref{reg_stokeslet} and \eqref{reg_doublet}.
 However, the logarithmic discrepancy noted in Theorem \ref{thm:powerlaw} remains: it is fundamentally a local issue rather than a decay issue, stemming from the additional $\delta$ present in the regularized Stokeslet denominator when evaluating at the fiber surface. It is the behavior of regularized Stokeslets at distances $O(\epsilon)$ from the fiber centerline that causes disagreement between regularized and classical SBT.
Using a blob function that is compactly supported within the slender body would likely remedy this issue, but at the expense of simplicity and some practical utility. \\
\end{remark}

The remainder of the paper is structured as follows. In Section \ref{sec:proof}, we prove Theorem \ref{thm:powerlaw}, beginning in Section \ref{subsec:prelims} with some preliminary definitions and a key lemma based on \cite{closed_loop}. We then proceed to show each of the velocity estimates in Section \ref{subsec:velocity} and the force estimate in Section \ref{subsec:force}. We conclude with a numerical study of the difference between regularized and classical SBT to assess the practical implications of Theorem \ref{thm:powerlaw}.


\section{Proof of Theorem \ref{thm:powerlaw}}\label{sec:proof}
\subsection{Preliminaries}\label{subsec:prelims}
We begin by defining some notation and a crucial lemma for estimating the integral expressions appearing in \eqref{MRS_SBT} and \eqref{classicalSBT}. \\

For $s$, $s'\in \T$, it will be useful to work in terms of the difference $\bars = s-s'$ rather than $s'$. Then, for $\bx(s,\theta) = \X(s) +\epsilon \be_\rho(s,\theta)$ along $\Gamma_\epsilon$, let 
\begin{equation}\label{Rdef0}
\bm{R}(s,\bars,\theta)=\bx -\X(s-\bars)=\overline{\X}(s,s-\bars)+\epsilon\be_\rho(s,\theta), 
\end{equation}
where $\overline{\X}(s,s-\bars)=\X(s)-\X(s-\bars)$. Since we assume $\X\in C^2$, we may make use of the following expansion for $\bm{R}$: 
\begin{equation}\label{Rdef}
 \bm{R}(s,\bars,\theta)= \bars \be_{\rm s}+\epsilon \be_\rho(s,\theta) + \bars^2\bm{Q} \quad \text{ for some }\abs{\bm{Q}}\le \frac{\kappa_{\max}}{2}.
 \end{equation} 

Throughout the following, for any function $\bm{g}\in C^1(\T)$, we will denote
\begin{align*}
\norm{\bm{g}}_{C^1(\T)} = \norm{\bm{g}}_{C^0(\T)} + \norm{\bm{g}'}_{C^0(\T)}, \text{ where } \norm{\bm{g}}_{C^0(\T)} = \max_{s\in\T}\abs{\bm{g}(s)}.
\end{align*}

The proof of Theorem \ref{thm:powerlaw} relies essentially entirely on the following lemma, which is a simple adaptation of Lemmas 3.3--3.6 in \cite{closed_loop}. 
\begin{lemma}\label{CPAM_lemmas}
Let $\bm{R}$ be as defined in \eqref{Rdef}. 
For $m,n$ integers with $m\ge0$, $n>0$, and for $\epsilon,\delta$ sufficiently small, we have 
\begin{equation}\label{lemma3_3}
\int_{-1/2}^{1/2}\frac{\abs{\bars}^m}{(|\bm{R}|^2+\delta^2)^{n/2}} \le 
\begin{cases}
C\abs{\log(\epsilon^2+\delta^2)}, & n=m+1 \\
C(\epsilon^2+\delta^2)^{(m+1-n)/2} , & n\ge m+2.
\end{cases}
\end{equation}

For $\bm{g}\in C^1(\T)$, if $m>0$ is odd and $n\ge m+2$, we may obtain the following refinement: 
\begin{equation}\label{lemma3_4}
\abs{\int_{-1/2}^{1/2}\frac{\bars^m}{(|\bm{R}|^2+\delta^2)^{n/2}}\bm{g}(\bars) d\bars} \le 
\begin{cases}
C\norm{\bm{g}}_{C^1(\T)}\abs{\log(\epsilon^2+\delta^2)}, & n=m+1 \\
C\norm{\bm{g}}_{C^1(\T)}(\epsilon^2+\delta^2)^{(m+2-n)/2}, & n\ge m+3
\end{cases}
\end{equation}

If $m\ge 0$ is even and $n\ge m+3$, we have
\begin{equation}\label{lemma3_5}
\abs{\int_{-1/2}^{1/2}\frac{\bars^m}{(|\bm{R}|^2+\delta^2)^{n/2}}\bm{g}(\bars) d\bars - (\epsilon^2+\delta^2)^{(m+1-n)/2} d_{mn} \bm{g}(0)} \le C(\epsilon^2+\delta^2)^{(m+2-n)/2}\norm{\bm{g}}_{C^1(\T)}, 
\end{equation}
where
\begin{align*}
d_{mn} &= \int_{-\infty}^\infty\frac{\tau^m}{(\tau^2+1)^{n/2}}d\tau; \\
d_{03} &= 2, \; d_{05} =\frac{4}{3}, \; d_{07} = \frac{16}{15}, \; d_{25} = \frac{2}{3}, \; d_{27}=\frac{4}{15}.
\end{align*}

Finally, if $n=1$ or $n=3$, we have
\begin{equation}\label{lemma3_6}
\begin{aligned}
&\abs{\int_{-1/2}^{1/2}\frac{\bars^{n-1}}{(|\bm{R}|^2+\delta^2)^{n/2}}\bm{g}(\bars)d\bars - \int_{-1/2}^{1/2}\bigg(\frac{\bars^{n-1}}{|\overline{\X}|^n}\bm{g}(\bars) -\frac{\bm{g}(0)}{|\bars|} \bigg)d\bars +\bm{g}(0)\log(\epsilon^2+\delta^2) + (n-1)\bm{g}(0)} \\
&\hspace{4cm} \le C\norm{\bm{g}}_{C^1(\T)}(\epsilon^2+\delta^2)^{1/2}\abs{\log(\epsilon^2+\delta^2) }
\end{aligned}
\end{equation}

In each of the above bounds, the constant $C$ depends only on $n$, $c_\Gamma$, and $\kappa_{\max}$.
\end{lemma}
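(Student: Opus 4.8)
The plan is to reduce all four estimates to the model integrals $\int_{-1/2}^{1/2}|\bars|^m(\bars^2+a^2)^{-n/2}\,d\bars$ with $a^2 := \epsilon^2+\delta^2$, and to show that the regularization enters only through the shift $\epsilon^2 \mapsto \epsilon^2+\delta^2$, so that the arguments of Lemmas 3.3--3.6 in \cite{closed_loop} transfer with no essential change. The starting point is the algebraic identity obtained by squaring \eqref{Rdef} and using $\be_{\rm s}\cdot\be_\rho = 0$ and $|\be_{\rm s}|=|\be_\rho|=1$:
\[
|\bm{R}|^2+\delta^2 = \bars^2 + (\epsilon^2+\delta^2) + E, \qquad E = 2\bars^3\,\be_{\rm s}\cdot\bm{Q} + 2\epsilon\bars^2\,\be_\rho\cdot\bm{Q} + \bars^4|\bm{Q}|^2.
\]
The key observation is that $E$ is \emph{independent of $\delta$} and, since $|\bm{Q}|\le\kappa_{\max}/2$, satisfies $|E|\le C\bars^2(\bars^2+a^2)^{1/2}$; thus the entire $\delta$-dependence is absorbed into the quadratic part, while the error term retains the same structure and the same $\bars^2$ gain as in the unregularized case.

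First I would establish the two-sided comparability $|\bm{R}|^2+\delta^2 \asymp \bars^2 + a^2$ for $\epsilon,\delta$ small. For $|\bars|$ small this is immediate from the identity above since $E$ is subdominant; for $|\bars|$ bounded away from $0$ it follows from the non-self-intersection bound $|\overline{\X}(s,s-\bars)|\ge c_\Gamma|\bars|$ together with $|\bm{R}|\ge|\overline{\X}|-\epsilon$. This yields \eqref{lemma3_3} directly: bounding the denominator below by $c(\bars^2+a^2)^{n/2}$ and rescaling $\bars = a\tau$ produces $a^{m+1-n}\int|\tau|^m(\tau^2+1)^{-n/2}\,d\tau$, which is $O(a^{m+1-n}) = O((\epsilon^2+\delta^2)^{(m+1-n)/2})$ when $n\ge m+2$ (the $\tau$-integral converges) and $O(|\log a|) = O(|\log(\epsilon^2+\delta^2)|)$ in the borderline case $n=m+1$.

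For the refined estimates \eqref{lemma3_4} and \eqref{lemma3_5} I would split the kernel as $(|\bm{R}|^2+\delta^2)^{-n/2} = (\bars^2+a^2)^{-n/2} + \bigl[(|\bm{R}|^2+\delta^2)^{-n/2} - (\bars^2+a^2)^{-n/2}\bigr]$. By the mean value theorem and the comparability, the bracketed difference is bounded by $C|E|(\bars^2+a^2)^{-(n+2)/2}\le C\bars^2(\bars^2+a^2)^{-(n+1)/2}$, i.e.\ it behaves like a kernel with $m$ raised by $2$ and $n$ raised by $1$, which \eqref{lemma3_3} shows is of the claimed lower order. For the main part $\bars^m(\bars^2+a^2)^{-n/2}\bm{g}(\bars)$ I would expand $\bm{g}(\bars) = \bm{g}(0) + \bars\int_0^1\bm{g}'(t\bars)\,dt$. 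When $m$ is odd the $\bm{g}(0)$-piece is an odd integrand on a symmetric interval and vanishes, leaving the even-power remainder (effective power $m+1$) controlled by \eqref{lemma3_3}; when $m$ is even the $\bm{g}(0)$-piece gives the leading term, which I would evaluate by extending the rescaled integral to $\R$ (picking up the constant $d_{mn}(\epsilon^2+\delta^2)^{(m+1-n)/2}$) and bounding the tail $|\tau|\ge(2a)^{-1}$ and the $\bars\bm{g}'$-remainder at the next order.

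The main obstacle is the critical logarithmic case \eqref{lemma3_6} ($n=1,3$, $m=n-1$ even), where the leading $\bm{g}(0)\log(\epsilon^2+\delta^2)$ term and the constant $(n-1)\bm{g}(0)$ must be extracted \emph{exactly} rather than merely bounded. Here I would first subtract the singular ($\delta=\epsilon=0$) kernel $\bars^{n-1}|\overline{\X}|^{-n}$ together with the Cauchy-type term $\bm{g}(0)/|\bars|$ to isolate a convergent remainder, then compute the model contribution $\int_{-1/2}^{1/2}\bars^{n-1}(\bars^2+a^2)^{-n/2}\,d\bars$ in closed form (an $\operatorname{arcsinh}$ for $n=1$, reduced to it for $n=3$), whose expansion produces exactly $-\log(\epsilon^2+\delta^2) - (n-1) + O(a^2)$. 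The delicate point is to show that the $E$-correction contributes only at order $(\epsilon^2+\delta^2)^{1/2}|\log(\epsilon^2+\delta^2)|$ and \emph{not} at $O(1)$: a crude estimate near $\bars=0$ would spuriously perturb the extracted constant, so one must use the full $\bars^2$ gain in $|E|$ to keep the difference integrable up to $\bars=0$ and subleading in $a$.
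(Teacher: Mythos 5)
Your proposal is correct and follows essentially the same route as the paper, which simply declares the lemma a ``straightforward adaptation'' of Lemmas 3.3--3.6 of the cited reference by inserting $\delta^2$ into the denominator; your key identity $|\bm{R}|^2+\delta^2=\bars^2+(\epsilon^2+\delta^2)+E$ with $E$ independent of $\delta$ and $|E|\le C\bars^2(\bars^2+\epsilon^2+\delta^2)^{1/2}$ is exactly what makes that adaptation work, reducing everything to the reference arguments with $\epsilon^2$ replaced by $\epsilon^2+\delta^2$. You in fact supply more detail than the paper does, including the correct closed-form evaluation producing the constants $-\log(\epsilon^2+\delta^2)-(n-1)$ and the right caution about the $E$-correction in the critical case \eqref{lemma3_6}.
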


\begin{proof}
The estimates \eqref{lemma3_3}, \eqref{lemma3_4}, \eqref{lemma3_5}, and \eqref{lemma3_6} are straightforward adaptations of Lemmas 3.3, 3.4, 3.5, and 3.6, respectively, in \cite{closed_loop}, to include a factor of $\delta^2$ in the denominator. 
\end{proof}
Note that Lemma \ref{CPAM_lemmas} may also be applied with $\delta=0$, which is the case covered in \cite{closed_loop}. 

\subsection{Velocity differences}\label{subsec:velocity}
To prove Theorem \ref{thm:powerlaw}, we begin by showing the surface estimate \eqref{bulk_est} for the difference $\bu^\delta-\bu^\SB$ along $\Gamma_\epsilon$ in full detail. We then show that the centerline estimate \eqref{centerline_est} and the bulk (decay) estimate \eqref{decay_est} can be proved following the same outline.  \\

At any $\bx=\bx(s,\theta)$ on the fiber surface $\Gamma_\epsilon$, we write the difference $\bu^\delta\big(\bx(s,\theta)\big) - \bu^\SB\big(\bx(s,\theta)\big)$ as
\begin{align*}
\bu^\delta\big(\bx(s,\theta)\big) &- \bu^\SB\big(\bx(s,\theta)\big) = \frac{1}{8\pi}\bigg(S_1 + S_2 + S_3 + D_1 + D_2 + D_3\bigg); \\
S_1 &= \int_{-1/2}^{1/2} \bigg(\frac{1}{(|\bm{R}|^2+\delta^2)^{1/2}}-\frac{1}{|\bm{R}|} \bigg) \bm{f}(s-\bars) \, d\bars \\
S_2 &= \int_{-1/2}^{1/2} \bigg(\frac{1}{(|\bm{R}|^2+\delta^2)^{3/2}} - \frac{1}{|\bm{R}|^3} \bigg)(\bm{R}\bm{R}^{\rm T}) \bm{f}(s-\bars) \, d\bars \\
S_3 &= \int_{-1/2}^{1/2} \frac{\delta^2}{(|\bm{R}|^2+\delta^2)^{3/2}} \bm{f}(s-\bars) \, d\bars \\
D_1 &= \frac{1}{2}\int_{-1/2}^{1/2} \bigg(\frac{\epsilon^2+\delta^2}{(|\bm{R}|^2+\delta^2)^{3/2}}-\frac{\epsilon^2}{|\bm{R}|^3} \bigg) \bm{f}(s-\bars) \, d\bars \\
D_2 &= -\frac{3}{2}\int_{-1/2}^{1/2} \bigg(\frac{\epsilon^2+\delta^2}{(|\bm{R}|^2+\delta^2)^{5/2}} - \frac{\epsilon^2}{|\bm{R}|^5} \bigg) (\bm{R}\bm{R}^{\rm T})\bm{f}(s-\bars) \, d\bars \\
D_3 &= -\frac{3}{2}\int_{-1/2}^{1/2} \frac{\delta^2(\epsilon^2+\delta^2)}{(|\bm{R}|^2+\delta^2)^{5/2}} \bm{f}(s-\bars) \, d\bars 
\end{align*}
We proceed to estimate each term separately.
By Lemma \ref{CPAM_lemmas}, subtracting estimate \eqref{lemma3_6} with $\delta=0$ from \eqref{lemma3_6} with $\delta>0$, the difference $S_1$ satisfies 
\begin{align*}
\abs{S_1 + \log\bigg(1+\frac{\delta^2}{\epsilon^2}\bigg)\bm{f}(s)} &= \abs{S_1 + \big(\log(\epsilon^2+\delta^2)-\log(\epsilon^2)\big)\bm{f}(s)} \\
&\le C\norm{\bm{f}}_{C^1(\T)}(\epsilon^2+\delta^2)^{1/2}\abs{\log(\epsilon^2+\delta^2) }.
\end{align*}

For $S_2$, we use \eqref{Rdef} to write 
\begin{equation}\label{RRT}
\begin{aligned}
\bm{RR}^{\rm T} &= R_1 + R_2 + R_3; \\
R_1 &= \bars^2\be_{\rm s}\be_{\rm s}^{\rm T}+\epsilon^2\be_\rho\be_\rho^{\rm T} \\
R_2 &= \bars\epsilon(\be_{\rm s}\be_\rho^{\rm T} + \be_\rho\be_{\rm s}^{\rm T}) +\epsilon\bars^2(\be_\rho\bm{Q}^{\rm T}+ \bm{Q}\be_\rho^{\rm T}) \\
R_3 &= \bars^3(\be_{\rm s}\bm{Q}^{\rm T}+\bm{Q}\be_{\rm s}^{\rm T}) + \bars^4\bm{Q}\bm{Q}^{\rm T}
\end{aligned}
\end{equation}
and also note 
\begin{equation}\label{Rdiff}
\begin{aligned}
\abs{\frac{1}{(|\bm{R}|^2+\delta^2)^{n/2}} - \frac{1}{|\bm{R}|^n}}
&= \abs{\frac{-\delta^2}{|\bm{R}|\sqrt{|\bm{R}|^2+\delta^2}(|\bm{R}|+\sqrt{|\bm{R}|^2+\delta^2})}\sum_{\ell = 0}^{n-1}\frac{1}{(|\bm{R}|^2+\delta^2)^{\ell/2}|\bm{R}|^{n-1-\ell}} } \\
&\le \frac{\delta^2}{\abs{\bm{R}}^{n+2}}.
\end{aligned}
\end{equation}

Then, using Lemma \ref{CPAM_lemmas}, estimate \eqref{lemma3_3}, along with equation \eqref{Rdiff}, we have
\begin{align*}
\abs{\int_{-1/2}^{1/2} \bigg(\frac{1}{(|\bm{R}|^2+\delta^2)^{3/2}} - \frac{1}{|\bm{R}|^3} \bigg)R_3 \, \bm{f}(s-\bars) \, d\bars} &\le \frac{C\delta^2}{(\epsilon^2+\delta^2)^{1/2}}\norm{\bm{f}}_{C^0(\T)}.
\end{align*}

Furthermore, using Lemma \ref{CPAM_lemmas}, estimates \eqref{lemma3_4} and \eqref{lemma3_3}, along with \eqref{Rdiff}, we obtain
\begin{align*}
\abs{\int_{-1/2}^{1/2} \bigg(\frac{1}{(|\bm{R}|^2+\delta^2)^{3/2}} - \frac{1}{|\bm{R}|^3} \bigg)R_2 \, \bm{f}(s-\bars) \, d\bars} &\le \frac{C\delta^2\epsilon}{\epsilon^2+\delta^2}\norm{\bm{f}}_{C^1(\T)}.
\end{align*}

Finally, subtracting the estimates \eqref{lemma3_6} and \eqref{lemma3_5} with $\delta=0$ from \eqref{lemma3_6} and \eqref{lemma3_5} with $\delta>0$, we obtain 
\begin{align*}
&\abs{\int_{-1/2}^{1/2} \bigg(\frac{1}{(|\bm{R}|^2+\delta^2)^{3/2}} - \frac{1}{|\bm{R}|^3} \bigg)R_1 \, \bm{f}(s-\bars) \, d\bars + \log(1+ \delta^2/\epsilon^2)\be_{\rm s}(\be_{\rm s}\cdot\bm{f}(s)) + \frac{2\delta^2}{\epsilon^2+\delta^2}\be_\rho(\be_\rho\cdot\bm{f}(s))} \\
 &\hspace{3cm}\le C\bigg(\sqrt{\epsilon^2+\delta^2}\abs{\log\big(\epsilon^2+\delta^2\big)}+\frac{\epsilon^2}{(\delta^2+\epsilon^2)^{1/2}} \bigg)\norm{\bm{f}}_{C^1(\T)}.
\end{align*}

Altogether, the term $S_2$ satisfies
\begin{align*}
\abs{S_2 + \log\bigg(1+\frac{\delta^2}{\epsilon^2}\bigg)\be_{\rm s}(\be_{\rm s}\cdot\bm{f}(s)) + \frac{2\delta^2}{\epsilon^2+\delta^2}\be_\rho(\be_\rho\cdot\bm{f}(s)) } &\le C(\epsilon^2+\delta^2)^{1/2}\abs{\log\big(\epsilon^2+\delta^2\big)}\norm{\bm{f}}_{C^1(\T)} .
\end{align*}

Next, using Lemma \ref{CPAM_lemmas}, estimate \eqref{lemma3_5}, the final Stokeslet term $S_3$ satisfies 
\begin{align*}
\abs{S_3 - \frac{2\delta^2}{\epsilon^2+\delta^2}\bm{f}(s)} &\le \frac{C\delta^2}{(\epsilon^2+\delta^2)^{1/2}}\norm{\bm{f}}_{C^1(\T)}. 
\end{align*}

We now turn to the doublet terms. Using Lemma \ref{CPAM_lemmas}, subtracting estimate \eqref{lemma3_5} with $\delta=0$ from \eqref{lemma3_5} with $\delta>0$, the $O(1)$ terms cancel to leave 
\begin{align*}
\abs{D_1} &\le C\bigg((\epsilon^2+\delta^2)^{1/2}+\frac{\epsilon^2}{\sqrt{\epsilon^2+\delta^2}} \bigg)\norm{\bm{f}}_{C^1(\T)} \le C(\epsilon^2+\delta^2)^{1/2}\norm{\bm{f}}_{C^1(\T)}.
\end{align*}

To estimate $D_2$, we again use the expansion \eqref{RRT} of $\bm{RR}^{\rm T}$ as well as the bound \eqref{Rdiff}. By Lemma \ref{CPAM_lemmas}, equation \eqref{lemma3_3}, we have
\begin{align*}
\abs{-\frac{3}{2}\int_{-1/2}^{1/2} \bigg(\frac{\epsilon^2+\delta^2}{(|\bm{R}|^2+\delta^2)^{5/2}} - \frac{\epsilon^2}{|\bm{R}|^5} \bigg) R_3 \, \bm{f}(s-\bars) \, d\bars} &\le \frac{C\delta^2}{(\delta^2+\epsilon^2)^{1/2}}\bigg(\frac{\epsilon^2}{\delta^2+\epsilon^2} +1 \bigg)\norm{\bm{f}}_{C^0(\T)}.
\end{align*}
Furthermore, using equations \eqref{lemma3_4} and \eqref{lemma3_3}, we have 
\begin{align*}
\abs{-\frac{3}{2}\int_{-1/2}^{1/2} \bigg(\frac{\epsilon^2+\delta^2}{(|\bm{R}|^2+\delta^2)^{5/2}} - \frac{\epsilon^2}{|\bm{R}|^5} \bigg) R_2 \, \bm{f}(s-\bars) \, d\bars} &\le \frac{C\delta^2\epsilon}{\delta^2+\epsilon^2}\bigg(\frac{\epsilon^2}{\delta^2+\epsilon^2} +1 \bigg)\norm{\bm{f}}_{C^1(\T)}.
\end{align*}
Lastly, subtracting the estimate \eqref{lemma3_5} with $\delta=0$ from \eqref{lemma3_5} with $\delta>0$, we obtain 
\begin{align*}
\abs{-\frac{3}{2}\int_{-1/2}^{1/2} \bigg(\frac{\epsilon^2+\delta^2}{(|\bm{R}|^2+\delta^2)^{5/2}} - \frac{\epsilon^2}{|\bm{R}|^5} \bigg) R_1 \, \bm{f}(s-\bars) \, d\bars - \frac{2\delta^2}{\epsilon^2+\delta^2}\be_\rho(\be_\rho\cdot\bm{f}(s)) } &\le C(\epsilon^2+\delta^2)^{1/2}\norm{\bm{f}}_{C^1(\T)}.
\end{align*}
Putting the above three bounds together, we obtain the following estimate for $D_2$:
\begin{align*}
\abs{D_2- \frac{2\delta^2}{\epsilon^2+\delta^2}\be_\rho(\be_\rho\cdot\bm{f}(s))} &\le C(\epsilon^2+\delta^2)^{1/2}\norm{\bm{f}}_{C^1(\T)}.
\end{align*}
We can see that $D_2$ serves to cancel the leading order $\be_\rho\be_\rho^{\rm T}$ term of $S_2$, but does not cancel the logarithmic term. \\

Finally, we may estimate $D_3$ using equation \eqref{lemma3_5} of Lemma \ref{CPAM_lemmas} with $m=0$ and $n=5$ to yield
\begin{align*}
\abs{D_3 + \frac{2\delta^2}{\epsilon^2+\delta^2}\bm{f}(s)} &\le \frac{C\delta^2}{(\delta^2+\epsilon^2)^{1/2}} \norm{\bm{f}}_{C^1(\T)}.
\end{align*}
Note that this term exactly cancels the $O(1)$ contribution from the regularized Stokeslet term $S_3$. \\

Combining the above estimates for the three Stokeslet terms and the three doublet terms, we have, for $\bx\in\Gamma_\epsilon$,
\begin{align*}
 &\abs{8\pi\big(\bu^\delta(\bx) - \bu^\SB(\bx)\big) +\log\bigg(1+\frac{\delta^2}{\epsilon^2}\bigg)({\bf I}+\be_{\rm s}\be_{\rm s}^{\rm T}) \bm{f}(s)} \\
 &\hspace{2cm}\le \abs{S_1+\log\bigg(1+\frac{\delta^2}{\epsilon^2}\bigg)\bm{f}(s)} +\abs{S_2 + \log\bigg(1+\frac{\delta^2}{\epsilon^2}\bigg)\be_{\rm s}(\be_{\rm s}\cdot\bm{f}(s)) + 2\frac{\delta^2}{\epsilon^2+\delta^2}\be_\rho(\be_\rho\cdot\bm{f}(s))} \\
 &\hspace{3cm}+ \abs{S_3 - \frac{2\delta^2}{\epsilon^2+\delta^2}\bm{f}(s)} + \abs{D_1} +\abs{D_2- 2\frac{\delta^2}{\epsilon^2+\delta^2}\be_\rho(\be_\rho\cdot\bm{f}(s))} +\abs{D_3 + \frac{2\delta^2}{\delta^2+\epsilon^2}\bm{f}(s)} \\
 &\hspace{2cm}\le C\sqrt{\epsilon^2+\delta^2}\log(\epsilon^2+\delta^2)\norm{\bm{f}}_{C^1(\T)}.
 \end{align*}


To show the centerline estimate \eqref{centerline_est} of Theorem \ref{thm:powerlaw}, we compare $\bu^\delta(\X(s))$, evaluated along the fiber centerline, to $\bu^\SB_{\rm C}(\X(s))$, the centerline expression arising from classical SBT. To prove \eqref{centerline_est}, it will be most convenient to use the expression \eqref{my_SBT} with $\eta=1$, which is shown in \cite{norway} to satisfy the estimate \eqref{center_est}; i.e. the centerline expression \eqref{my_SBT} asymptotically approaches the classical SBT expression \eqref{classicalSBT} evaluated at the slender body surface to $O(\epsilon\log\epsilon)$. \\

Along the fiber centerline, the regularized Stokeslet expression $\bu^\delta$ is given by \eqref{MRS_SBT} with the doublet coefficient $\beta=\frac{\delta^2}{2}$ \eqref{doub_coeff}. Again, this choice of doublet coefficient is due to the need to cancel certain $O(1)$ terms from the regularized Stokeslet (see Remark \ref{rem:doublet}).  \\

Using these expressions, we may write out the fiber velocity difference $\bu^\delta(\X(s)) - \bu^\SB_{\rm C}(\X(s))$ as 
\begin{align*}
\bu^\delta(\X(s)) - \bu^\SB_{\rm C}(\X(s)) &= \frac{1}{8\pi}\bigg(I_1+ I_2 + I_3 + J_1 + J_2 +J_3\bigg), \\
I_1 &= \int_{-1/2}^{1/2} \bigg(\frac{1}{(|\overline{\X}|^2+\delta^2)^{1/2}} - \frac{1}{(|\overline{\X}|^2+\epsilon^2)^{1/2}} \bigg) \bm{f}(s-\bars)d\bars \\
I_2 &= \int_{-1/2}^{1/2}\bigg(\frac{\overline{\X}\overline{\X}^{\rm T}}{(|\overline{\X}|^2+\delta^2)^{3/2}} - \frac{\overline{\X}\overline{\X}^{\rm T}}{(|\overline{\X}|^2+\epsilon^2)^{3/2}} \bigg) \bm{f}(s-\bars)d\bars \\
I_3 &= \int_{-1/2}^{1/2}\frac{\delta^2}{(|\overline{\X}|^2+\delta^2)^{3/2}} \bm{f}(s-\bars)d\bars \\
J_1 &= \frac{1}{2}\int_{-1/2}^{1/2} \bigg(\frac{\delta^2}{(|\overline{\X}|^2+\delta^2)^{3/2}} - \frac{\epsilon^2}{(|\overline{\X}|^2+\epsilon^2)^{3/2}} \bigg) \bm{f}(s-\bars)d\bars \\
J_2 &= -\frac{3}{2}\int_{-1/2}^{1/2}\bigg(\frac{\delta^2\overline{\X}\overline{\X}^{\rm T}}{(|\overline{\X}|^2+\delta^2)^{5/2}} - \frac{\epsilon^2\overline{\X}\overline{\X}^{\rm T}}{(|\overline{\X}|^2+\epsilon^2)^{5/2}} \bigg) \bm{f}(s-\bars)d\bars \\
J_3 &= -\frac{3}{2}\int_{-1/2}^{1/2}\frac{\delta^4}{(|\overline{\X}|^2+\delta^2)^{5/2}} \bm{f}(s-\bars)d\bars.
\end{align*}

Each of the above quantities can be estimated using a straightforward adaptation of Lemma \ref{CPAM_lemmas}. In particular, replacing $\bm{R}=\overline{\X}+\epsilon\be_\rho$ with $\overline{\X} = \X(s) - \X(s-\bars)$ in each denominator, we obtain analogous bounds to estimates \eqref{lemma3_3}--\eqref{lemma3_6} which depend on $\delta$ but not $\epsilon$. Note that these estimates can also be applied for $\delta=\epsilon$, in which case the resulting bound solely depends on $\epsilon$ (which can be used to estimate terms coming from $\bu^\SB_{\rm C}$). Then the above quantities may be estimated in an analogous manner to the previous surface velocity estimates, yielding 
\begin{align*}
\abs{I_1 + \log(\delta^2/\epsilon^2)\bm{f}(s)} &\le C(\epsilon\abs{\log\epsilon}+\delta\abs{\log\delta})\norm{\bm{f}}_{C^1(\T)}\\
\abs{I_2 + \log(\delta^2/\epsilon^2)\be_{\rm s}(\be_{\rm s}\cdot\bm{f}(s))} &\le C(\epsilon\abs{\log\epsilon}+\delta\abs{\log\delta})\norm{\bm{f}}_{C^1(\T)} \\
\abs{I_3 - 2\bm{f}(s)} &\le C(\epsilon+\delta)\norm{\bm{f}}_{C^1(\T)} \\
\abs{J_1 } &\le C(\epsilon+\delta)\norm{\bm{f}}_{C^1(\T)}\\
\abs{J_2 } &\le C(\epsilon+\delta)\norm{\bm{f}}_{C^1(\T)} \\
\abs{J_3 + 2\bm{f}(s)} &\le C(\epsilon+\delta)\norm{\bm{f}}_{C^1(\T)}.
\end{align*} 

Combining the six estimates above, we obtain
\begin{align*}
\abs{8\pi\big(\bu^\delta(\X(s)) - \bu^\SB_{\rm C}(\X(s)) \big) + 2\log\bigg(\frac{\delta}{\epsilon}\bigg)({\bf I}+\be_{\rm s}\be_{\rm s}^{\rm T})\bm{f}(s) } &\le C(\epsilon\abs{\log\epsilon}+\delta\abs{\log\delta})\norm{\bm{f}}_{C^1(\T)}.
\end{align*}

Finally, we show the velocity decay estimate \eqref{decay_est} for fluid points $\bx$ away from the surface of the 3D filament. For simplicity we consider the velocity difference $\bu^\delta-\bu^\SB$ only at points $\bx$ which are close enough to the fiber surface to have a unique representation in the fiber frame \eqref{frameODE}. In particular, estimate \eqref{decay_est} is meant to be a statement about velocity differences close to the fiber surface, rather than far-field agreement. \\

Writing $\bx$ with respect to the fiber frame \eqref{frameODE} as $\bx = \X(s)+\epsilon^{1-\alpha}\be_\rho(s,\theta)$, we define
\begin{align*}
\bm{R}_\alpha = \bx - \X(s-\bars) = \overline{\X}(s,\bars) + \epsilon^{1-\alpha}\be_\rho(s,\theta).
\end{align*}

We then consider the difference $\bu^\delta(\bx)- \bu^\SB(\bx)$ as 
\begin{align*}
\bu^\delta(\bx) &- \bu^\SB(\bx) = \frac{1}{8\pi}\bigg(S_{\alpha,1} + S_{\alpha,2} + D_{\alpha,1} + D_{\alpha,2} \bigg); \\
S_{\alpha,1} &= \int_{-1/2}^{1/2} \bigg(\frac{{\bf I}}{(|\bm{R}_\alpha|^2+\delta^2)^{1/2}}-\frac{{\bf I}}{|\bm{R}_\alpha|} + \frac{\bm{R}_\alpha\bm{R}_\alpha^{\rm T}}{(|\bm{R}_\alpha|^2+\delta^2)^{3/2}} - \frac{\bm{R}_\alpha\bm{R}_\alpha^{\rm T}}{|\bm{R}_\alpha|^3} \bigg)\bm{f}(s-\bars) \, d\bars \\
S_{\alpha,2} &= \int_{-1/2}^{1/2} \frac{\delta^2}{(|\bm{R}_\alpha|^2+\delta^2)^{3/2}} \bm{f}(s-\bars) \, d\bars \\
D_{\alpha,1} &= \frac{\epsilon^2}{2}\int_{-1/2}^{1/2} \bigg(\frac{{\bf I}}{(|\bm{R}_\alpha|^2+\delta^2)^{3/2}}-\frac{{\bf I}}{|\bm{R}_\alpha|^3} -3\bigg(\frac{ \bm{R}_\alpha\bm{R}_\alpha^{\rm T}}{(|\bm{R}_\alpha|^2+\delta^2)^{5/2}} - \frac{\bm{R}_\alpha\bm{R}_\alpha^{\rm T}}{|\bm{R}_\alpha|^5} \bigg) \bigg)\bm{f}(s-\bars) \, d\bars \\
D_{\alpha,2} &= \frac{\delta^2}{2}\int_{-1/2}^{1/2} \bigg(\frac{{\bf I}}{(|\bm{R}_\alpha|^2+\delta^2)^{3/2}}-\frac{3\bm{R}_\alpha\bm{R}_\alpha^{\rm T}}{(|\bm{R}_\alpha|^2+\delta^2)^{5/2}}  - \frac{3(\epsilon^2+\delta^2){\bf I}}{(|\bm{R}_\alpha|^2+\delta^2)^{5/2}} \bigg) \bm{f}(s-\bars) \, d\bars .
\end{align*}

We proceed to estimate each term in turn. 
First note that, in analogy with estimate \eqref{Rdiff}, we have
\begin{equation}\label{Ralpha_diff}
\begin{aligned}
\abs{\frac{1}{(|\bm{R}_\alpha|^2+\delta^2)^{n/2}} - \frac{1}{|\bm{R}_\alpha|^n}}
&\le \frac{\delta^2}{\abs{\bm{R}_\alpha}^{n+2}}.
\end{aligned}
\end{equation}

Using \eqref{Ralpha_diff}, we have
\begin{align*}
\abs{S_{\alpha,1}} &\le 2\norm{\bm{f}}_{C^0(\T)}\int_{-1/2}^{1/2} \frac{\delta^2}{|\bm{R}_\alpha|^3} \, d\bars \le \frac{C\delta^2}{\epsilon^{2(1-\alpha)}+\delta^2}\norm{\bm{f}}_{C^0(\T)} \\
&\le \frac{C\delta^2}{\epsilon^{2(1-\alpha)}}\norm{\bm{f}}_{C^0(\T)},
\end{align*}
where we have used Lemma \ref{CPAM_lemmas}, estimate \eqref{lemma3_3} with $\epsilon^{1-\alpha}$ in place of $\epsilon$. \\

Similarly, again using \eqref{Ralpha_diff} along with estimate \eqref{lemma3_3} (again replacing $\epsilon$ with $\epsilon^{1-\alpha}$), we have
\begin{align*}
\abs{D_{\alpha,1}} &\le 2\epsilon^2\norm{\bm{f}}_{C^0(\T)}\int_{-1/2}^{1/2} \frac{\delta^2}{|\bm{R}_\alpha|^5} \, d\bars \le \frac{C\delta^2\epsilon^2}{(\epsilon^{2(1-\alpha)}+\delta^2)^2} \norm{\bm{f}}_{C^0(\T)} \\
&\le \frac{C\delta^2}{\epsilon^{2(1-2\alpha)}}\norm{\bm{f}}_{C^0(\T)}.
\end{align*}

Finally, using Lemma \ref{CPAM_lemmas}, estimate \eqref{lemma3_3} with $\epsilon^{1-\alpha}$ in place of $\epsilon$, we may bound $S_{\alpha,2}$ and $D_{\alpha,2}$ as
\begin{align*}
\abs{S_{\alpha,2}} &\le \frac{C\delta^2}{\epsilon^{2(1-\alpha)}+\delta^2}\le \frac{C\delta^2}{\epsilon^{2(1-\alpha)}}\norm{\bm{f}}_{C^0(\T)}
\end{align*}
and
\begin{align*}
\abs{D_{\alpha,2}} &\le \norm{\bm{f}}_{C^0(\T)}\int_{-1/2}^{1/2} \bigg(\frac{2\delta^2}{(|\bm{R}_\alpha|^2+\delta^2)^{3/2}}  + \frac{3\delta^2(\epsilon^2+\delta^2)}{2(|\bm{R}_\alpha|^2+\delta^2)^{5/2}} \bigg)  \, d\bars \\
&\le \frac{C\delta^2}{\epsilon^{2(1-\alpha)}+\delta^2}\bigg(1+\frac{\epsilon^2+\delta^2}{\epsilon^{2(1-\alpha)}+\delta^2}\bigg)\norm{\bm{f}}_{C^0(\T)} 
\le \frac{C\delta^2}{\epsilon^{2(1-\alpha)}}\norm{\bm{f}}_{C^0(\T)}. 
\end{align*}

Combining the above four estimates, we obtain the decay bound \eqref{decay_est} near the slender body surface.

\subsection{Force along fiber surface}\label{subsec:force}

To show the force estimate \eqref{force_est} from Theorem \ref{thm:powerlaw}, we first call attention to the two terms appearing in the $\bu^\delta$ expression \eqref{MRS_SBT} of the form
\begin{align*}
\bm{d}_n^\delta[\bm{f}](\bx) = \frac{1}{8\pi}\int_{\T}\frac{\delta^2\bm{f}(s')}{(|\bm{R}|^2+\delta^2)^{n/2}} ds',
\end{align*}
one due to the regularized Stokeslet and the other due to the regularized doublet. In addition, we focus on the term from the regularized pressure \eqref{pressure_reg} of the form 
\begin{align*}
 \bm{d}^\delta_5[\bm{f}\cdot\bm{R}] = \frac{1}{8\pi}\int_{\T}\frac{\delta^2\bm{R}\cdot\bm{f}(s')}{(|\bm{R}|^2+\delta^2)^{5/2}} ds'.
 \end{align*} 

The contribution of these three terms to the force density $\bm{f}^\delta(s)$ \eqref{reg_force} along $\Gamma_\epsilon$ is given by
\begin{equation}\label{Fdelta}
\bm{F}^\delta(s) :=  \int_0^{2\pi}\bigg(2\E(\bm{d}_3^\delta[\bm{f}])\bm{n} - 3(\epsilon^2+\delta^2)\E(\bm{d}_5^\delta[\bm{f}])\bm{n} - 3 (\bm{d}^\delta_5[\bm{f}\cdot\bm{R}])\bm{n} \bigg)\mc{J}_\epsilon(s,\theta)d\theta.
\end{equation}

Note that besides the three $\bm{d}^\delta$ terms, each of the other terms contributing to the regularized force definition \eqref{reg_force} have direct non-regularized analogues along the fiber surface. Thus each term of $\bm{f}^\delta-\bm{F}^\delta$ may be estimated following the exact same procedure as in \cite{closed_loop}, Section 3.3, but using the $\delta$-dependent estimates of Lemma \ref{CPAM_lemmas} in place of Lemmas 3.3--3.5 in \cite{closed_loop}. In doing this, we obtain the following bound on $\bm{f}^\delta-\bm{F}^\delta$:
\begin{equation}\label{old_force}
\abs{(\bm{f}^\delta-\bm{F}^\delta) - \frac{\bm{f}}{1+\delta^2/\epsilon^2}} \le C\sqrt{\epsilon^2+\delta^2}\norm{\bm{f}}_{C^1(\T)}.
\end{equation}

Therefore to show the force estimate \eqref{force_est} of Theorem \ref{thm:powerlaw}, it remains to calculate the additional force contribution from $\bm{F}^\delta$. 

\begin{lemma}\label{lem:Fdelta}
Given a force density $\bm{f}\in C^1(\T)$, let $\bm{F}^\delta$ be given by \eqref{Fdelta}. Then, for $\epsilon$, $\delta$ sufficiently small we have
\begin{equation}
\abs{\bm{F}^\delta(s) + \frac{\delta^2/\epsilon^2}{(1+ \delta^2/\epsilon^2)^2}\bm{f}(s)} \le C\sqrt{\epsilon^2+\delta^2}\norm{\bm{f}}_{C^1(\T)}.
\end{equation}
\end{lemma}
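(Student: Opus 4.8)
The plan is to compute $\bm{F}^\delta(s)$ directly: differentiate the three kernels in \eqref{Fdelta} under the integral sign, contract with the inward normal $\bm{n}=-\be_\rho(s,\theta)$ on $\Gamma_\epsilon$, extract the leading behavior in $\epsilon,\delta$ via Lemma \ref{CPAM_lemmas}, and then integrate in $\theta$.

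First I would differentiate. Since $\nabla_\bx(|\bm{R}|^2+\delta^2)^{-n/2}=-n\bm{R}(|\bm{R}|^2+\delta^2)^{-(n+2)/2}$, the symmetric gradients are
\[
\E(\bm{d}_n^\delta[\bm{f}])=-\frac{n\delta^2}{16\pi}\int_\T \frac{\bm{f}\bm{R}^{\rm T}+\bm{R}\bm{f}^{\rm T}}{(|\bm{R}|^2+\delta^2)^{(n+2)/2}}\,ds',\qquad n=3,5.
\]
Contracting with $\bm{n}=-\be_\rho$ and using the decomposition \eqref{Rdef}, $\bm{R}=\bars\be_{\rm s}+\epsilon\be_\rho+\bars^2\bm{Q}$, gives $\bm{R}^{\rm T}\bm{n}=-(\epsilon+\bars^2\bm{Q}\cdot\be_\rho)$ and $\bm{f}^{\rm T}\bm{n}=-\be_\rho\cdot\bm{f}(s-\bars)$, so each integrand becomes $\delta^2$ times a sum of terms with a known power of $\bars$ in the numerator. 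Likewise $\bm{d}_5^\delta[\bm{f}\cdot\bm{R}]$ becomes $\delta^2$ times an integral with numerator $\bars(\be_{\rm s}\cdot\bm{f})+\epsilon(\be_\rho\cdot\bm{f})+\bars^2(\bm{Q}\cdot\bm{f})$.

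Next I would apply Lemma \ref{CPAM_lemmas} term by term. The even-power ($m=0$) pieces carry the leading behavior via \eqref{lemma3_5}, using $d_{05}=\tfrac43$ and $d_{07}=\tfrac{16}{15}$; every odd-power piece (\eqref{lemma3_4} with $n=5$) and every higher-power $\bm{Q}$ piece (\eqref{lemma3_3} or the $m=2$ case of \eqref{lemma3_5}) is $O\big(\delta^2(\epsilon^2+\delta^2)^{-1}\big)$ and, after the Jacobian factor $\mc{J}_\epsilon\approx\epsilon$, contributes only to the error $C\sqrt{\epsilon^2+\delta^2}\norm{\bm{f}}_{C^1(\T)}$. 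Retaining the leading terms one finds
\begin{align*}
2\E(\bm{d}_3^\delta[\bm{f}])\bm{n}&\approx\frac{\delta^2\epsilon}{2\pi(\epsilon^2+\delta^2)^2}\big(\bm{f}(s)+\be_\rho(\be_\rho\cdot\bm{f}(s))\big),\\
-3(\epsilon^2+\delta^2)\E(\bm{d}_5^\delta[\bm{f}])\bm{n}&\approx-\frac{\delta^2\epsilon}{\pi(\epsilon^2+\delta^2)^2}\big(\bm{f}(s)+\be_\rho(\be_\rho\cdot\bm{f}(s))\big),\\
-3\,\bm{d}_5^\delta[\bm{f}\cdot\bm{R}]\,\bm{n}&\approx\frac{\delta^2\epsilon}{2\pi(\epsilon^2+\delta^2)^2}\be_\rho(\be_\rho\cdot\bm{f}(s)).
\end{align*}
Finally I would multiply by $\mc{J}_\epsilon=\epsilon(1-\epsilon\wh\kappa)$ (the $\epsilon^2\wh\kappa$ correction being absorbed into the error) and integrate over $\theta\in[0,2\pi]$, using $\int_0^{2\pi}\be_\rho\be_\rho^{\rm T}\,d\theta=\pi({\bf I}-\be_{\rm s}\be_{\rm s}^{\rm T})$ and the vanishing of odd trigonometric moments. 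Summing the three contributions, the $\be_{\rm s}\be_{\rm s}^{\rm T}$ coefficients combine as $-\tfrac12+1-\tfrac12=0$ and the ${\bf I}$-coefficients as $\tfrac32-3+\tfrac12=-1$, leaving
\[
\bm{F}^\delta(s)=-\frac{\delta^2\epsilon^2}{(\epsilon^2+\delta^2)^2}\,\bm{f}(s)+O\!\big(\sqrt{\epsilon^2+\delta^2}\,\norm{\bm{f}}_{C^1(\T)}\big),
\]
which is the claimed bound since $\frac{\delta^2\epsilon^2}{(\epsilon^2+\delta^2)^2}=\frac{\delta^2/\epsilon^2}{(1+\delta^2/\epsilon^2)^2}$.

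The main obstacle is the bookkeeping: I must track the leading terms across all three kernels simultaneously and verify that every discarded piece is genuinely $O(\sqrt{\epsilon^2+\delta^2})$. The delicate point is that the individual leading terms blow up like $(\epsilon^2+\delta^2)^{-2}$, so error control hinges on the prefactor $\delta^2$ together with the two factors of $\epsilon$ (one from $\bm{R}^{\rm T}\bm{n}$, one from $\mc{J}_\epsilon$) producing the combination $\frac{\delta^2\epsilon^2}{(\epsilon^2+\delta^2)^{3/2}}\le\sqrt{\epsilon^2+\delta^2}$ in each error estimate, and on the exact cancellation of the $\be_{\rm s}\be_{\rm s}^{\rm T}$ terms, which depends on the precise values of $d_{05}$ and $d_{07}$.
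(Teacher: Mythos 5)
Your proposal is correct and follows essentially the same route as the paper's proof: the paper likewise computes $2\E(\bm{d}_n^\delta[\bm{f}])\bm{n}$ explicitly with $\bm{n}=-\be_\rho$, expands $\bm{R}$ via \eqref{Rdef}, organizes the resulting pieces into six terms $F_1,\dots,F_6$, extracts the leading contributions with \eqref{lemma3_5} (using $d_{05}=\tfrac43$, $d_{07}=\tfrac{16}{15}$), and integrates in $\theta$ using $\int_0^{2\pi}\be_\rho\be_\rho^{\rm T}\,d\theta=\pi({\bf I}-\be_{\rm s}\be_{\rm s}^{\rm T})$. Your three leading-order expressions and the final coefficient cancellation ($\be_{\rm s}\be_{\rm s}^{\rm T}$: $-\tfrac12+1-\tfrac12=0$; ${\bf I}$: $\tfrac32-3+\tfrac12=-1$) match the paper's $F_3$, $F_5$, $F_6$ exactly, and your error accounting agrees with the paper's bounds.
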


\begin{proof}
Using that $\bm{n}(s,\theta)=-\be_\rho(s,\theta)$, we begin by noting that
\begin{align*}
&2\E(\bm{d}_n^\delta[\bm{f}])\bm{n}\\
 &= \int_{\T}\frac{n\delta^2}{(|\bm{R}|^2+\delta^2)^{(n+2)/2}}\bigg((\bm{R}\cdot\be_\rho)\bm{f}(s') + (\bm{f}(s')\cdot\be_\rho)\big((\bm{R}\cdot\be_\rho)\be_\rho + (\bm{R}\cdot\be_\theta)\be_\theta + (\bm{R}\cdot\be_{\rm s})\be_{\rm s} \big) \bigg) \, ds'.
\end{align*}

Using the definition \eqref{Fdelta} of $\bm{F}^\delta$, we may write  
\begin{align*}
\bm{F}^\delta &= F_1 + F_2 + F_3 + F_4 + F_5 + F_6; \\
 F_1&= \int_0^{2\pi}\int_{-1/2}^{1/2} \bigg(2\E(\bm{d}_3^\delta[\bm{f}])\bm{n} - 3(\epsilon^2+\delta^2)\E(\bm{d}_5^\delta[\bm{f}])\bm{n} - 3(\bm{d}^\delta_5[\bm{f}\cdot\bm{R}])\bm{n}\bigg)\, d\bars \, (-\epsilon^2\wh{\kappa})d\theta \\
  F_2&= \frac{1}{8\pi}\int_0^{2\pi}\int_{-1/2}^{1/2}\frac{3\delta^2}{(|\bm{R}|^2+\delta^2)^{5/2}}\bigg( (\bm{R}\cdot\be_\theta)\be_\theta +(\bm{R}\cdot\be_{\rm s})\be_{\rm s} \bigg)(\bm{f}(s-\bars)\cdot\be_\rho) \, d\bars \; \epsilon d\theta \\
 F_3&= \frac{1}{8\pi}\int_0^{2\pi}\int_{-1/2}^{1/2}\frac{3\delta^2}{(|\bm{R}|^2+\delta^2)^{5/2}} (\bm{R}\cdot\be_\rho)\bigg(\bm{f}(s-\bars) +  (\bm{f}(s-\bars)\cdot\be_\rho)\be_\rho \bigg) \, d\bars \; \epsilon d\theta \\
 F_4&= -\frac{15}{16\pi}\int_0^{2\pi}\int_{-1/2}^{1/2}\frac{\delta^2(\epsilon^2+\delta^2)}{(|\bm{R}|^2+\delta^2)^{7/2}} \bigg((\bm{R}\cdot\be_\theta)\be_\theta + (\bm{R}\cdot\be_{\rm s})\be_{\rm s}\bigg)(\bm{f}(s-\bars)\cdot\be_\rho) \, d\bars \; \epsilon d\theta \\
 F_5&= -\frac{15}{16\pi}\int_0^{2\pi}\int_{-1/2}^{1/2}\frac{\delta^2(\epsilon^2+\delta^2)}{(|\bm{R}|^2+\delta^2)^{7/2}}(\bm{R}\cdot\be_\rho)\bigg(\bm{f}(s-\bars) +(\bm{f}(s-\bars)\cdot\be_\rho)\be_\rho \bigg) \, d\bars \; \epsilon d\theta \\
 F_6 &= \frac{1}{8\pi}\int_0^{2\pi}\int_{-1/2}^{1/2}\frac{3\delta^2}{(|\bm{R}|^2+\delta^2)^{5/2}}\big(\bm{R}\cdot\bm{f}(s-\bars) \big) \be_\rho \, d\bars \; \epsilon d\theta.
 \end{align*} 
As with the previous velocity bounds, we estimate each of the above terms individually. The first term $F_1$ arises due to the nonzero curvature \eqref{kappahat} of the fiber. 
By Lemma \ref{CPAM_lemmas}, estimate \eqref{lemma3_3}, we have that
 \begin{align*}
 \abs{F_1} &\le C\norm{\bm{f}}_{C^0(\T)}\int_{-1/2}^{1/2}\bigg(\frac{\delta^2\epsilon^2}{(|\bm{R}|^2+\delta^2)^2} + \frac{\delta^2\epsilon^2(\delta^2+\epsilon^2)}{(|\bm{R}|^2+\delta^2)^3} \bigg)d\bars \le \frac{C\epsilon^2\delta^2}{(\epsilon^2+\delta^2)^{3/2}}\norm{\bm{f}}_{C^0(\T)} .
 \end{align*}

To bound the remaining terms, we will need that 
\begin{equation}\label{Rdots}
\bm{R}\cdot\be_\rho = \bars^2(\bm{Q}\cdot\be_\rho)+\epsilon, \quad \bm{R}\cdot\be_\theta = \bars^2(\bm{Q}\cdot\be_\theta), \quad \bm{R}\cdot\be_{\rm s} = \bars + \bars^2(\bm{Q}\cdot\be_{\rm s}).
\end{equation}

Then, using Lemma \ref{CPAM_lemmas}, estimates \eqref{lemma3_4} and \eqref{lemma3_3}, we have
\begin{align*}
\abs{F_2} &= \abs{\frac{1}{8\pi}\int_0^{2\pi}\int_{-1/2}^{1/2}\frac{3\delta^2\epsilon}{(|\bm{R}|^2+\delta^2)^{5/2}}\bigg(\bars\be_{\rm s} + \bars^2\big((\bm{Q}\cdot\be_\theta)\be_\theta +(\bm{Q}\cdot\be_{\rm s}))\be_{\rm s}\big) \bigg)(\bm{f}(s-\bars)\cdot\be_\rho) \, d\bars \, d\theta} \\
&\le \frac{C\delta^2\epsilon}{\delta^2+\epsilon^2}\norm{\bm{f}}_{C^1(\T)}.
\end{align*}

Similarly, the term $F_4$ satisfies
\begin{align*}
\abs{F_4} &= \abs{\frac{15}{16\pi}\int_0^{2\pi}\int_{-1/2}^{1/2}\frac{(\epsilon^2+\delta^2)\delta^2\epsilon}{(|\bm{R}|^2+\delta^2)^{7/2}}\bigg(\bars\be_{\rm s} + \bars^2\big((\bm{Q}\cdot\be_\theta)\be_\theta +(\bm{Q}\cdot\be_{\rm s}))\be_{\rm s}\big) \bigg)(\bm{f}(s-\bars)\cdot\be_\rho) \, d\bars \, d\theta} \\
&\le \frac{C\delta^2\epsilon}{\delta^2+\epsilon^2}\norm{\bm{f}}_{C^1(\T)}.
\end{align*}

To estimate $F_3$ and $F_5$, we will need to use that Lemma \ref{CPAM_lemmas}, estimate \eqref{lemma3_5} implies that for any $\bm{g}\in C^1(\Gamma_\epsilon)$, along the surface of the fiber, we have 
\begin{equation}\label{theta_int}
\begin{aligned}
&\abs{\int_0^{2\pi}\int_{-1/2}^{1/2}\frac{\bars^m}{(|\bm{R}|^2+\delta^2)^{n/2}}\bm{g}(\bars,\theta)\, d\bars d\theta - (\epsilon^2+\delta^2)^{(m+1-n)/2}d_{mn}\int_0^{2\pi}\bm{g}(0,\theta)d\theta} \\
&\hspace{3cm}\le C \max_{0\le\theta<2\pi}\norm{\bm{g}(\cdot,\theta)}_{C^1(\T)}(\epsilon^2+\delta^2)^{(m+2-n)/2},
\end{aligned}
\end{equation}
where $m,n$ and $d_{mn}$ are as in equation \eqref{lemma3_5}. Note furthermore that, using \eqref{erho}, 
\begin{align*}
\int_0^{2\pi} \be_\rho(s,\theta)\be_\rho(s,\theta)^{\rm T} \,d\theta &= \pi (\be_{n_1}\be_{n_1}^{\rm T}+\be_{n_2}\be_{n_2}^{\rm T})= \pi ({\bf I} - \be_{\rm s}\be_{\rm s}^{\rm T})
\end{align*}

Then, using \eqref{Rdots}, \eqref{lemma3_3}, and estimate \eqref{theta_int}, $F_3$ satisfies 
\begin{align*}
\abs{F_3 - \frac{\delta^2\epsilon^2}{2(\delta^2+\epsilon^2)^2}(3{\bf I}-\be_{\rm s}\be_{\rm s}^{\rm T})\bm{f}(s)} & \le \frac{C\delta^2\epsilon^2}{(\delta^2+\epsilon^2)^{3/2}}\norm{\bm{f}}_{C^1(\T)},
\end{align*}
while the term $F_5$ satisfies
\begin{align*}
\abs{F_5 + \frac{\delta^2\epsilon^2}{(\delta^2+\epsilon^2)^2}(3{\bf I}-\be_{\rm s}\be_{\rm s}^{\rm T})\bm{f}(s)} &\le \frac{C\delta^2\epsilon^2}{(\delta^2+\epsilon^2)^{3/2}}\norm{\bm{f}}_{C^1(\T)}.
\end{align*}

Finally, using \eqref{Rdef} along with Lemma \ref{CPAM_lemmas}, estimates \eqref{lemma3_3}, \eqref{lemma3_4}, and \eqref{lemma3_5}, we have that the pressure term $F_6$ satisfies 
\begin{align*}
\abs{F_6 - \frac{\delta^2\epsilon^2}{2(\delta^2+\epsilon^2)^2}({\bf I}-\be_{\rm s}\be_{\rm s}^{\rm T})\bm{f}(s)} &= \bigg|\frac{1}{8\pi}\int_0^{2\pi}\int_{-1/2}^{1/2}\frac{3\delta^2\epsilon(\bars\be_{\rm s}+\epsilon\be_\rho+\bars^2\bm{Q})}{(|\bm{R}|^2+\delta^2)^{5/2}} \cdot\bm{f}(s-\bars) \be_\rho \, d\bars \; d\theta \\
&\qquad - \frac{\delta^2\epsilon^2}{2(\delta^2+\epsilon^2)^2}({\bf I}-\be_{\rm s}\be_{\rm s}^{\rm T})\bm{f}(s) \bigg| \\
&\le \bigg|\frac{1}{8\pi}\int_0^{2\pi}\int_{-1/2}^{1/2}\frac{3\delta^2\epsilon^2\be_\rho\be_\rho^{\rm T}}{(|\bm{R}|^2+\delta^2)^{5/2}} \bm{f}(s-\bars) \, d\bars \; d\theta \\
&\qquad - \frac{\delta^2\epsilon^2}{2(\delta^2+\epsilon^2)^2}({\bf I}-\be_{\rm s}\be_{\rm s}^{\rm T})\bm{f}(s) \bigg| + \frac{C\delta^2\epsilon}{\delta^2+\epsilon^2}\norm{\bm{f}}_{C^1(\T)} \\
&\le \frac{C\delta^2\epsilon^2}{(\delta^2+\epsilon^2)^{3/2}}\norm{\bm{f}}_{C^1(\T)}.
\end{align*}

Altogether, the additional force contribution from $\bm{F}^\delta$ along $\Gamma_\epsilon$ satisfies 
\begin{align*}
\abs{\bm{F}^\delta + \frac{\delta^2\epsilon^2}{(\delta^2+\epsilon^2)^2}\bm{f}(s)} &\le \abs{F_1}+\abs{F_2}+\abs{F_3}+\abs{F_4-\frac{\delta^2\epsilon^2}{2(\delta^2+\epsilon^2)^2}(3{\bf I}-\be_{\rm s}\be_{\rm s}^{\rm T})\bm{f}(s)} \\
&\quad + \abs{F_5+\frac{\delta^2\epsilon^2}{(\delta^2+\epsilon^2)^2}(3{\bf I}-\be_{\rm s}\be_{\rm s}^{\rm T})\bm{f}(s)} + \abs{F_6 - \frac{\delta^2\epsilon^2}{2(\delta^2+\epsilon^2)^2}({\bf I}-\be_{\rm s}\be_{\rm s}^{\rm T})\bm{f}(s)} \\
&\le \frac{C\epsilon\delta}{\sqrt{\epsilon^2+\delta^2}}\norm{\bm{f}}_{C^1(\T)},
\end{align*}
from which we obtain Lemma \ref{lem:Fdelta}.

\end{proof}

\section{Numerical comparison}\label{sec:numerics}
To gain a better sense of the practical implications of Theorem \ref{thm:powerlaw}, we turn to numerics to better visualize discrepancies. We aim to address whether Theorem \ref{thm:powerlaw} actually has a noticeable effect in practice.

\subsection{Centerline difference}\label{subsec:num_center}
In Table \ref{tab:centerline} we provide numerical verification of the velocity difference along the fiber centerline predicted by \eqref{centerline_est}. Here we consider a slender torus of length 1 centered at the origin whose centerline lies in the $xy$-plane, and we prescribe a constant force $\bm{f}=\be_x$ along the fiber. We can see that even in this very simple scenario, choosing $\delta=\epsilon$ eliminates a sizeable $O(1)$ discrepancy between regularized and classical SBT. \\
\begin{table}[!ht]
\centering
\begin{tabular}{ | c | c | c | c | c | c |} 
 \hline 
 & \multicolumn{5}{|c|}{ $\norm{\bu^\delta(\X(\cdot))-\bu^\SB_{\rm C}}_{L^\infty(\T)}$ } \\ 
\hline 
$\epsilon$  & $\frac{\delta}{\epsilon}=\frac{1}{3}$ & $\frac{\delta}{\epsilon}=\frac{1}{2}$ & $\frac{\delta}{\epsilon}=1$ & $\frac{\delta}{\epsilon}=2$ & $\frac{\delta}{\epsilon}=3$ \\  
\hline
0.01 & 0.1705 & 0.1093 & 9.6153e-05 & 0.1088 & 0.1715  \\   
0.005 & 0.1708  & 0.1095  & 1.9382e-05 & 0.1099 & 0.1738  \\ 
0.001 & 0.1709 & 0.1096  & 1.3324e-05 & 0.1103 & 0.1748  \\ 
0.0005 & 0.1709  & 0.1096 & 1.4773e-05 & 0.1103  & 0.1748 \\
\hline\hline
$\abs{\frac{1}{2\pi}\log(\delta/\epsilon)}$ & 0.17485 & 0.11032 & 0 & 0.11032 & 0.17485 \\
\hline
\end{tabular}
\vspace{0.2cm}
\caption{$L^\infty$ norm of the difference between fiber velocity expressions $\bu^\delta(\X(s))$ \eqref{MRS_SBT} and $\bu^\SB_{\rm C}(\X(s))$ \eqref{my_SBT}  for different ratios of $\delta/\epsilon$. Note that $\frac{\delta}{\epsilon}=1$ is necessary to avoid an $O(1)$ difference that is much larger than the fiber radius. Here we have prescribed the constant force $\bm{f}=\be_x$ and use $\frac{2}{\epsilon}$ discretization points to fully resolve the behavior of the nearly singular integrals. }
\label{tab:centerline}
\end{table}

\subsection{Bulk difference}\label{subsec:bulk_diff}
In the bulk, where an $O(1)$ discrepancy between regularized and classical SBT cannot be eliminated, we are interested in visualizing flow differences around the slender body and their effects on multiple interacting fibers. For each of the following figures we use the regularization parameter $\delta=\epsilon$, since this results in the best agreement in fiber velocity expressions. \\

We first consider the flow around a single torus of length 1 with centerline in the $xy$-plane, and again consider the constant force $\bm{f}=\be_x$. 
Figure \ref{fig:torus1} shows the resulting difference $\abs{\bu^\delta(\bx)-\bu^\SB(\bx)}$ for fluid points $\bx$ around and up to the surface of the fiber for two different fiber radii ($\epsilon=0.01$ and $\epsilon=0.001$). Note that the magnitude of the maximum discrepancy between regularized and classical SBT remains the same as $\epsilon$ decreases, as stated in Theorem \ref{thm:powerlaw}, but the magnitude of this discrepancy is quite small (since $\delta=\epsilon$, the maximum discrepancy is $\frac{1}{4\pi}\log(2)\approx 0.05516$). We can also see that the discrepancy is largest where the prescribed force $\bm{f}=\be_x$ is tangent to the fiber surface (i.e. along the top and bottom of the fiber), but the discrepancy extends further into the bulk when the force is normal to the fiber.
Away from the surface of the fiber, we can see how the difference between $\bu^\delta(\bx)$ and $\bu^\SB(\bx)$ does decrease with $\epsilon$, as noted in Theorem \ref{thm:powerlaw}. In particular, any major discrepancies are localized very close to the fiber surface for small $\epsilon$.   \\
\begin{figure}[!ht]
\centering
\includegraphics[width=0.488\linewidth]{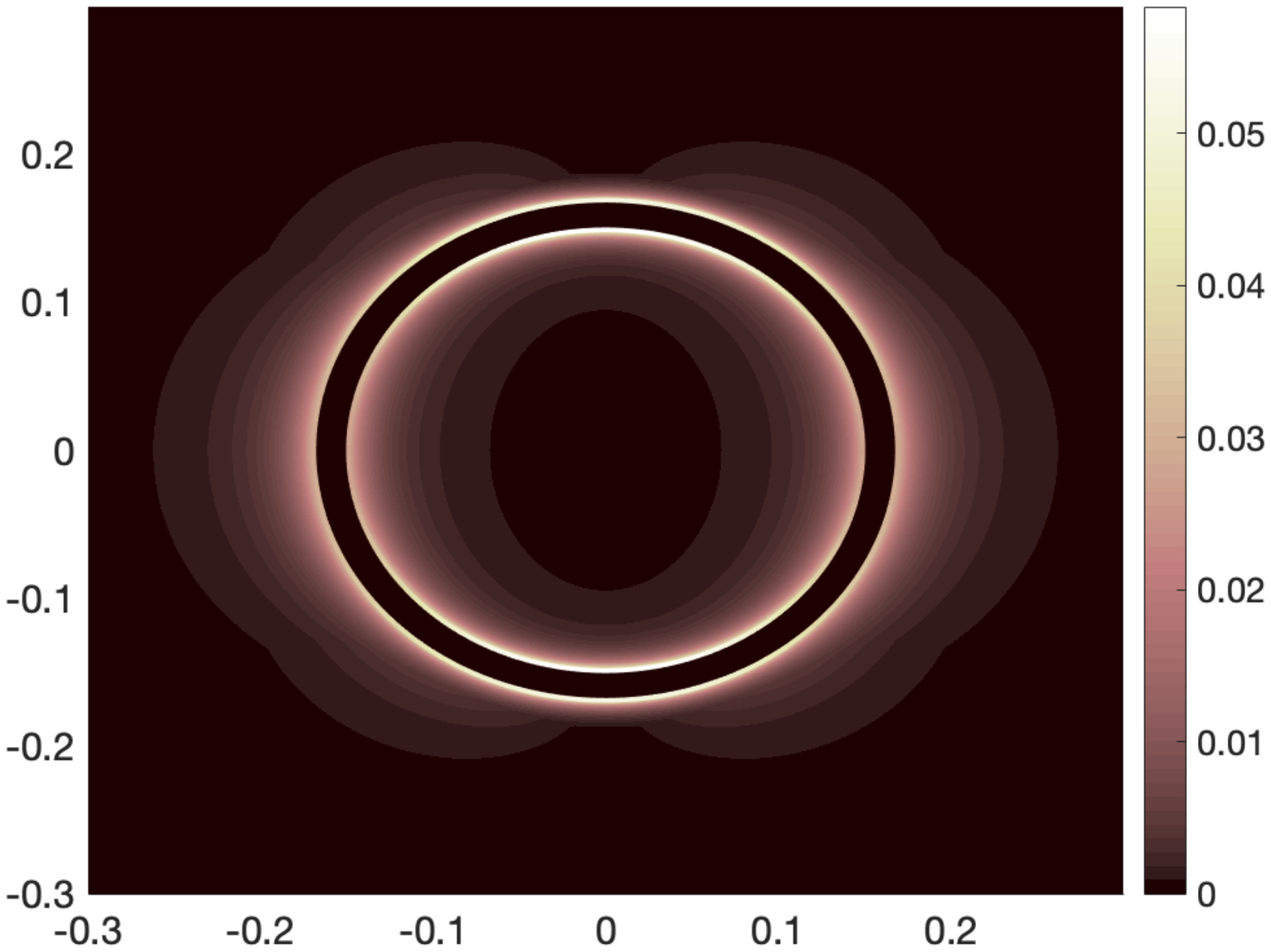} 
\includegraphics[width=0.48\linewidth]{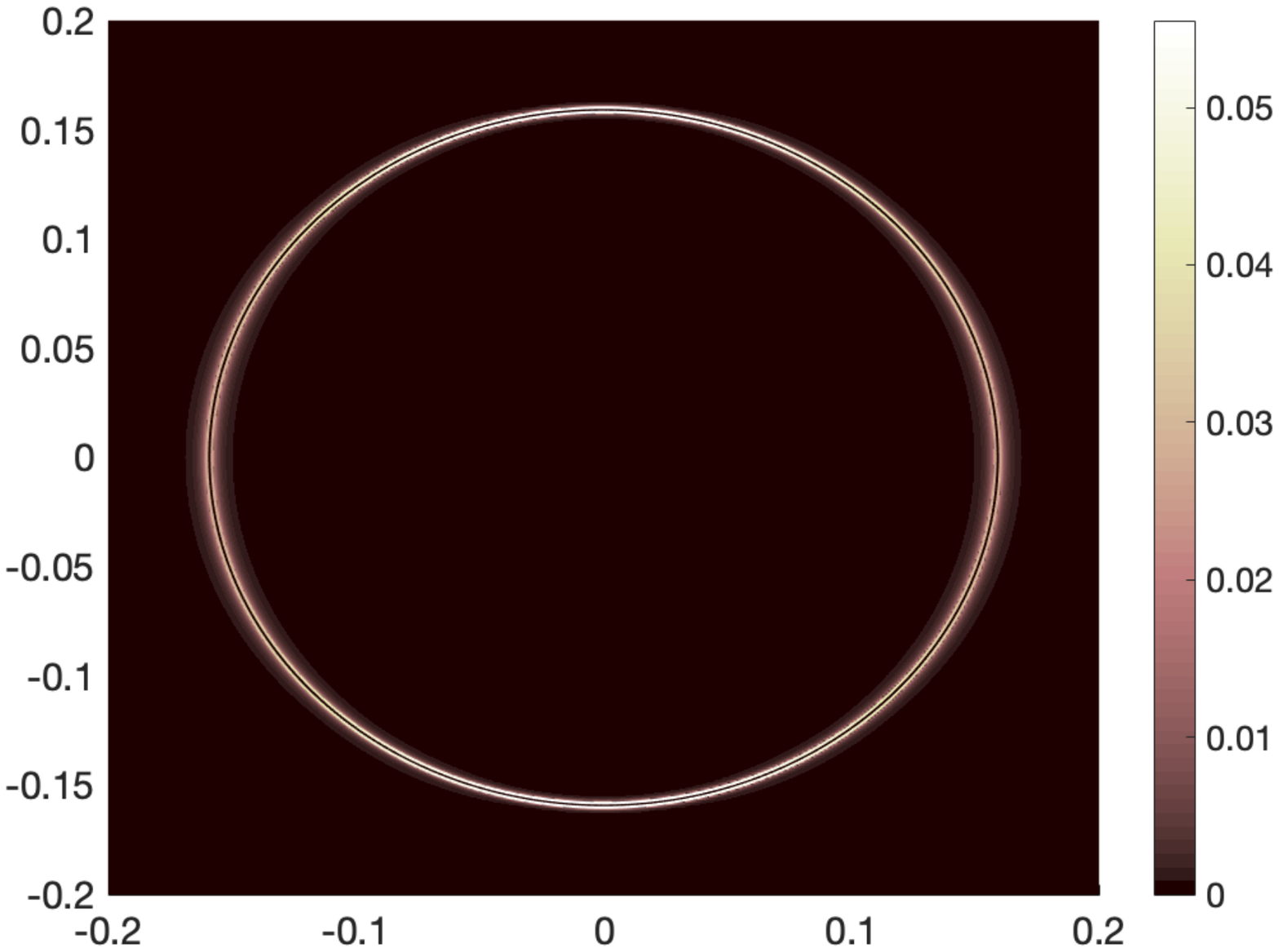} 
 \caption{Plot of the difference $\abs{\bu^\delta(\bx)-\bu^\SB(\bx)}$ for $\bx$ around a thin torus of length 1 with centerline in the $xy$-plane, subject to the forcing $\bm{f}=\be_x$, for $\epsilon=0.01$ (left) and $\epsilon=0.001$ (right). The maximum difference $\frac{1}{4\pi}\log(2)\approx 0.05516$ occurs at the fiber surface along the top and bottom of the fiber, where the force is tangent to the fiber surface. Note, however, that discrepancies extend further into the bulk on the left and right sides of the fiber, where the force is normal to the centerline.}
  \label{fig:torus1}
\end{figure}

We next consider the interaction between two nearby fibers as well as the flow around them. 
Figure \ref{fig:torus2} shows two thin tori of length 1 and radius $\epsilon=0.01$ with centerlines in the $xy$-plane, both centered along the $x$-axis at $x=0$ and $x=0.4$. For the constant force $\bm{f}=\be_x$, we plot the difference $\abs{\bu^\delta(\bx)-\bu^\SB(\bx)}$.\\

\begin{figure}[!ht]
\centering
\includegraphics[width=0.8\linewidth]{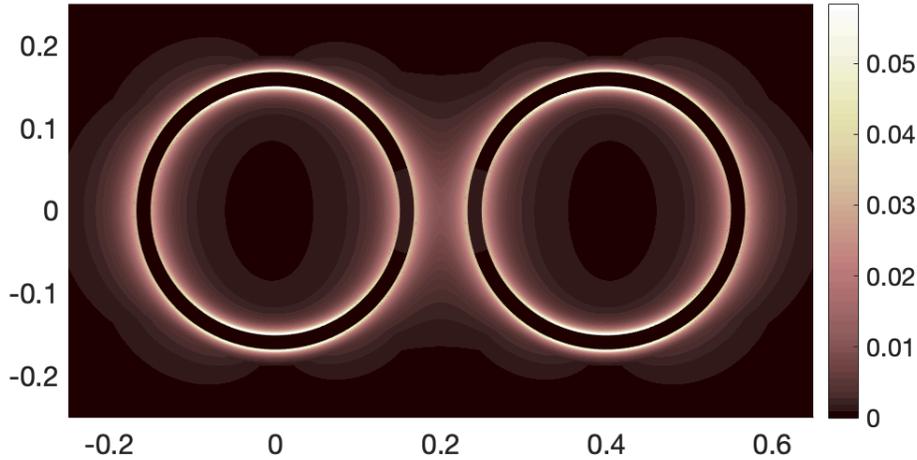} 
 \caption{Plot of the difference $\abs{\bu^\delta(\bx)-\bu^\SB(\bx)}$ for $\bx$ around two thin tori in the $xy$-plane, both subject to the constant force $\bm{f}=\be_x$. Both tori have radius $\epsilon=0.01$. The maximum difference is almost exactly the same as in the single torus case, indicating that the $O(1)$ surface discrepancy between regularized and classical SBT may not noticeably affect the flow between multiple hydrodynamically interacting fibers. }
  \label{fig:torus2}
\end{figure}

For two thin tori interacting hydrodynamically, we also take a closer look at velocity differences between regularized and classical SBT along the fiber itself. We consider first consider two tori of radius $\epsilon=0.01$ with centerlines in the $xy$-plane (see Figure \ref{fig:2torus_only}). The difference in fiber velocities $\bu^\delta(\X(s))$ and $\bu^\SB(\X(s))$ is largest where the fibers are closest to each other, but the magnitude of this difference is still smaller than the fiber radius, even when the fibers are very close together (see Figure \ref{fig:2torus_only}, top right). \\
\begin{figure}[!ht]
\centering
\includegraphics[width=0.49\linewidth]{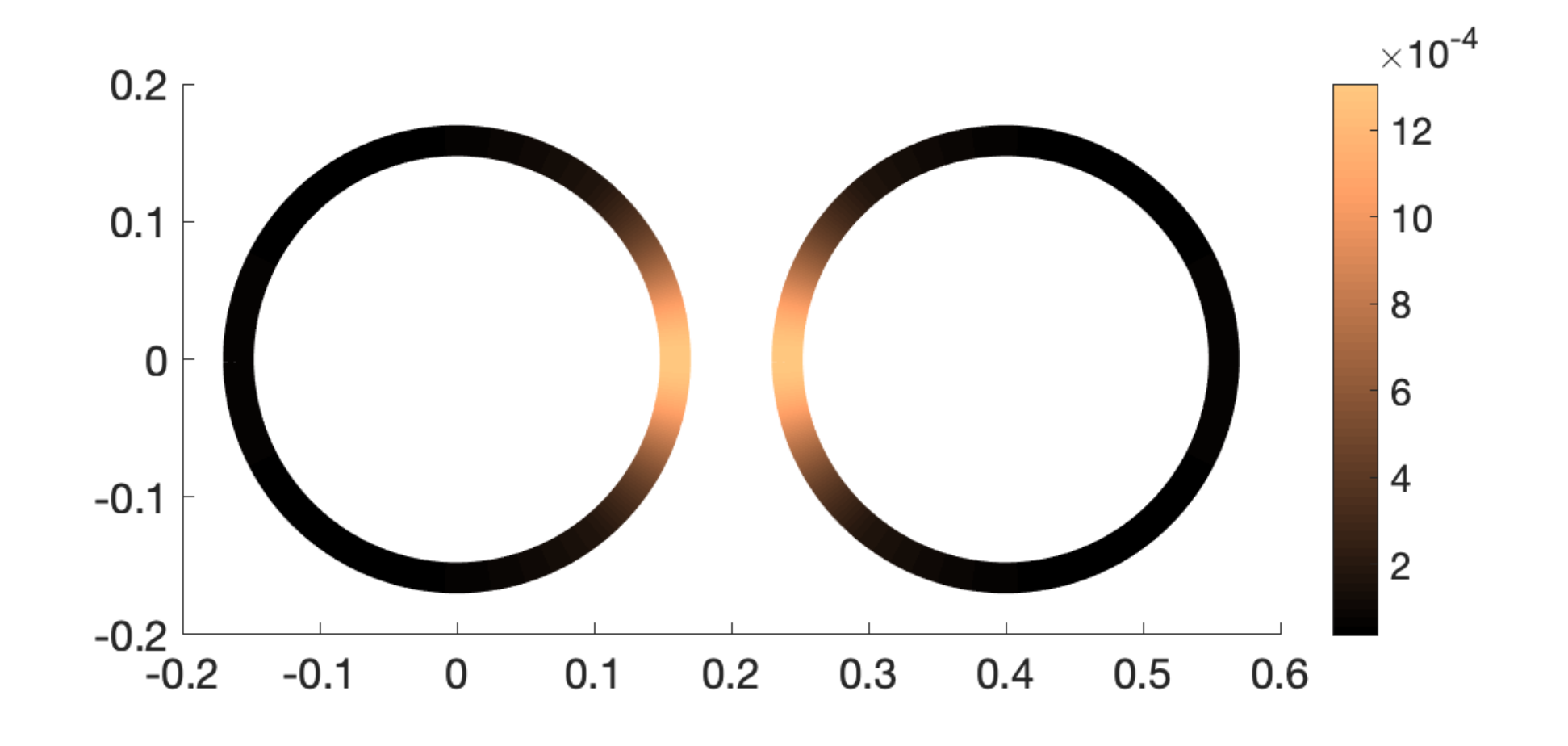} 
\includegraphics[width=0.49\linewidth]{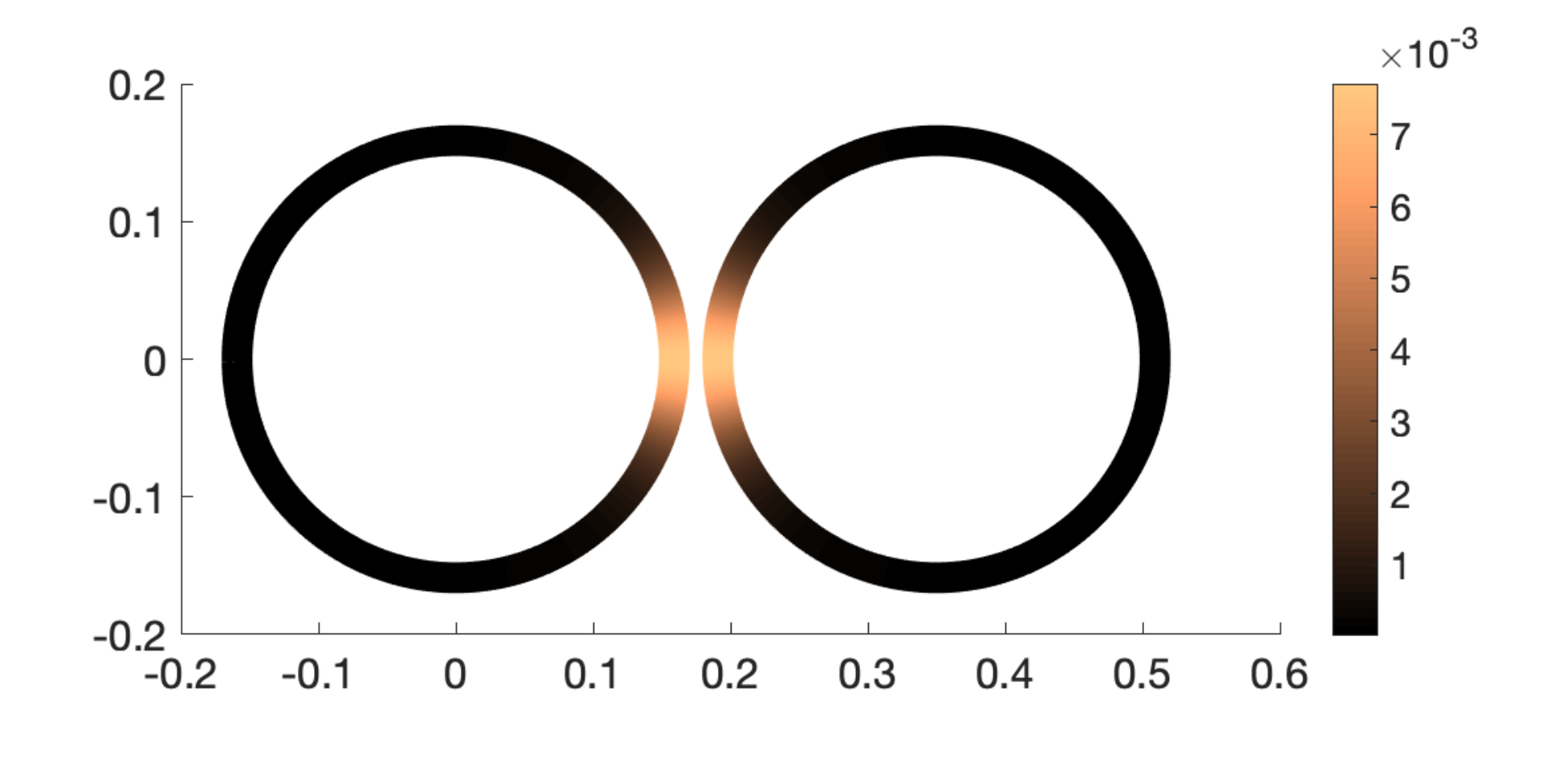} \\
\includegraphics[width=0.49\linewidth]{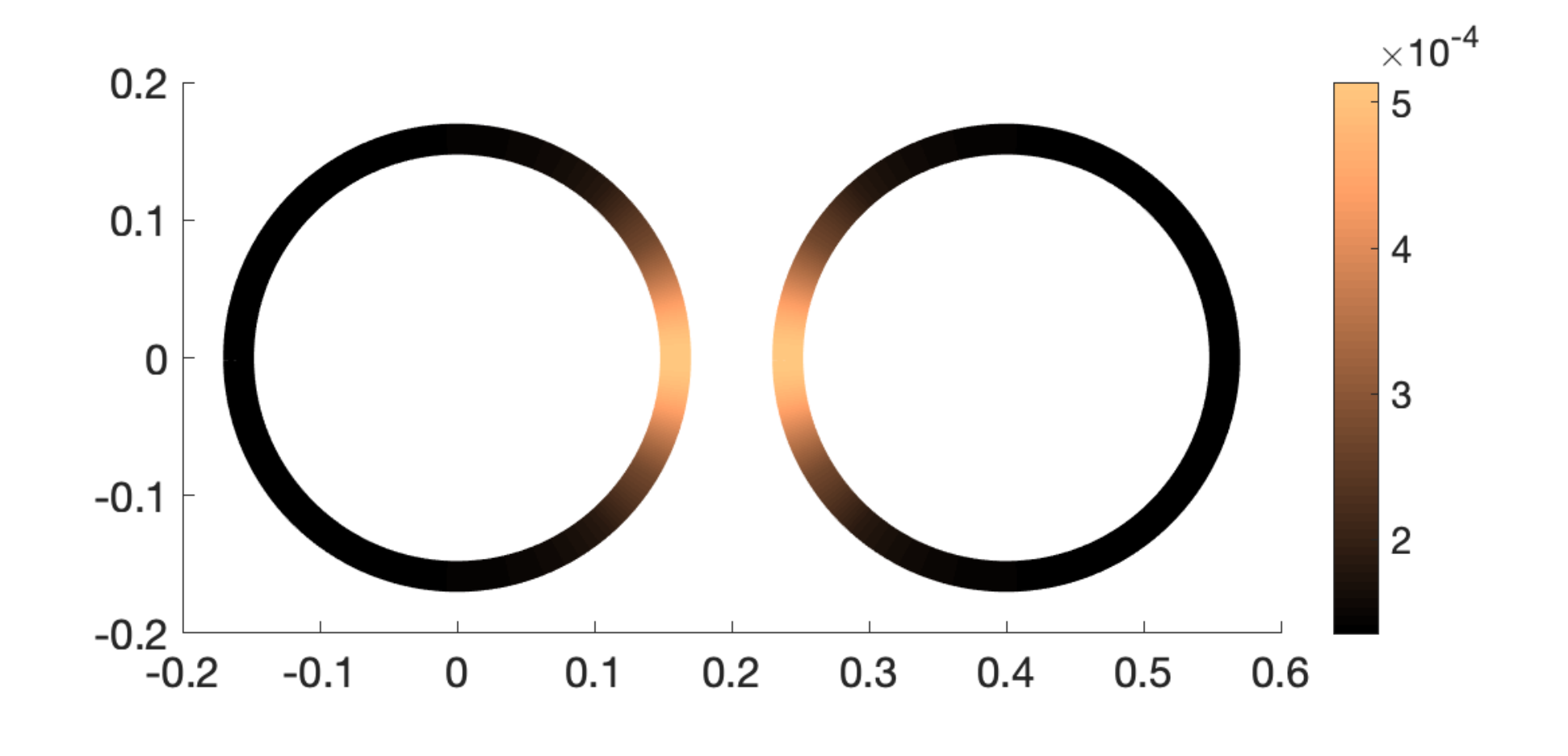} 
\includegraphics[width=0.49\linewidth]{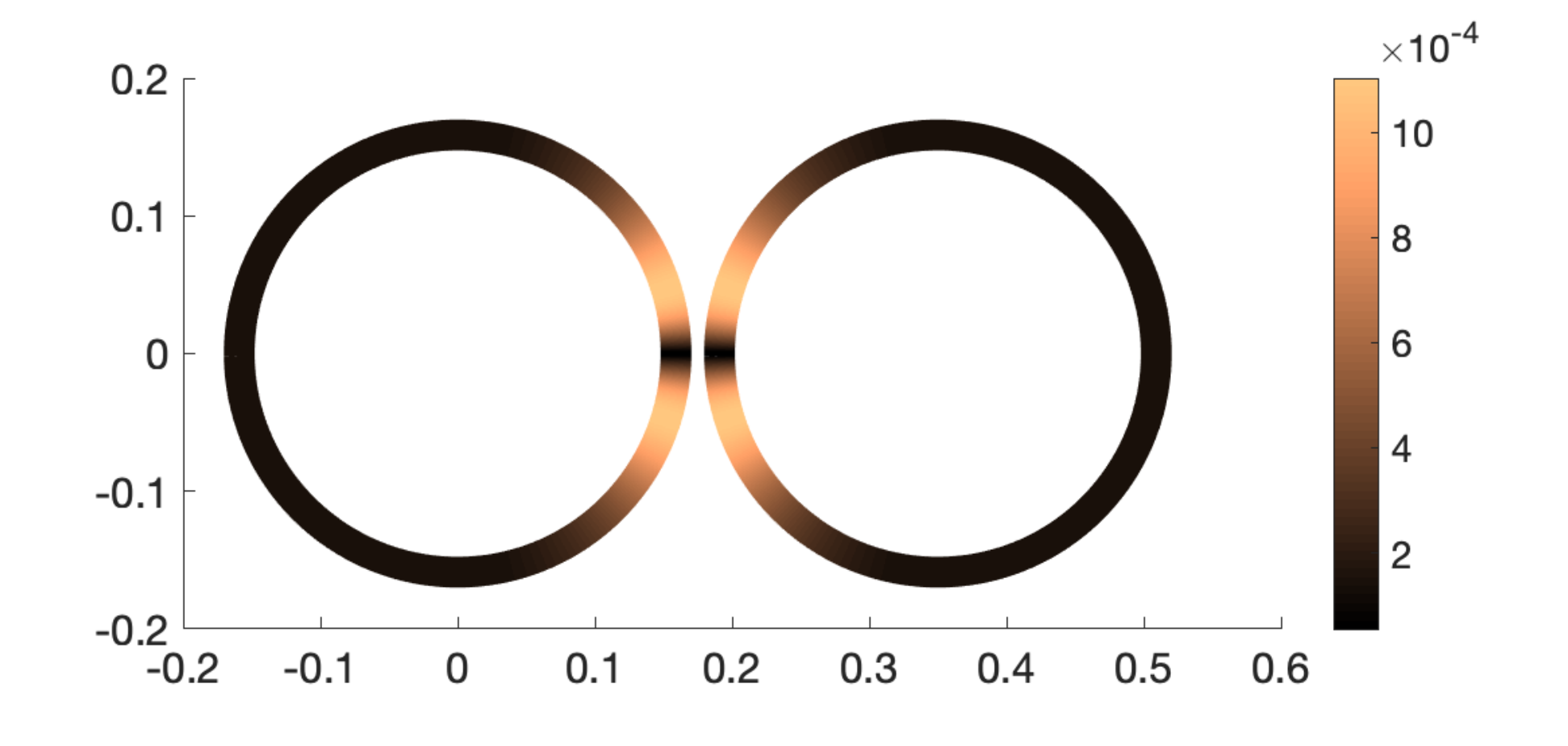} 
 \caption{Plot of the centerline difference $\abs{\bu^\delta(\X(s))-\bu^\SB_{\rm C}(\X(s))}$ for two hydrodynamically interacting thin tori in the $xy$-plane. Both tori have radius $\epsilon=0.01$. The top fibers are subject to the force $\bm{f}=\be_x$, while the bottom fibers are subject to the force $\bm{f}=\be_y$. The fibers on the left are $0.0617$ apart at their closest point, while the fibers on the right are $0.0117$ apart at their closest point. The maximum differences $\norm{\bu^\delta(\X(\cdot))-\bu^\SB_{\rm C}(\X(\cdot))}_{L^\infty(\T)}$ are (clockwise from top right): 0.0013, 0.0077, 0.00051, 0.0011. These differences are small, but significantly larger than when only a single fiber is present (see Table \ref{tab:centerline}, middle column). }
  \label{fig:2torus_only}
\end{figure}

We also consider two tori of radius $\epsilon=0.01$ with $xy$-planar centerlines that are stacked in the $z$-direction (see Figure \ref{fig:2torus_only2}). The fibers are a distance $0.01$ from each other at every point, and subject to a constant force $\bm{f}=\be_x$ (left) and $\bm{f}=\be_z$ (right). Again, the largest difference $\norm{\bu^\delta(\X(\cdot))-\bu^\SB_{\rm C}(\X(\cdot))}_{L^\infty(\T)}=0.0076$ (Figure \ref{fig:2torus_only2}, right) is still smaller than the fiber radius. This difference is, however, significantly larger than when only a single fiber is present (compare to Table \ref{tab:centerline}, center column). \\
\begin{figure}[!ht]
\centering
\includegraphics[width=0.49\linewidth]{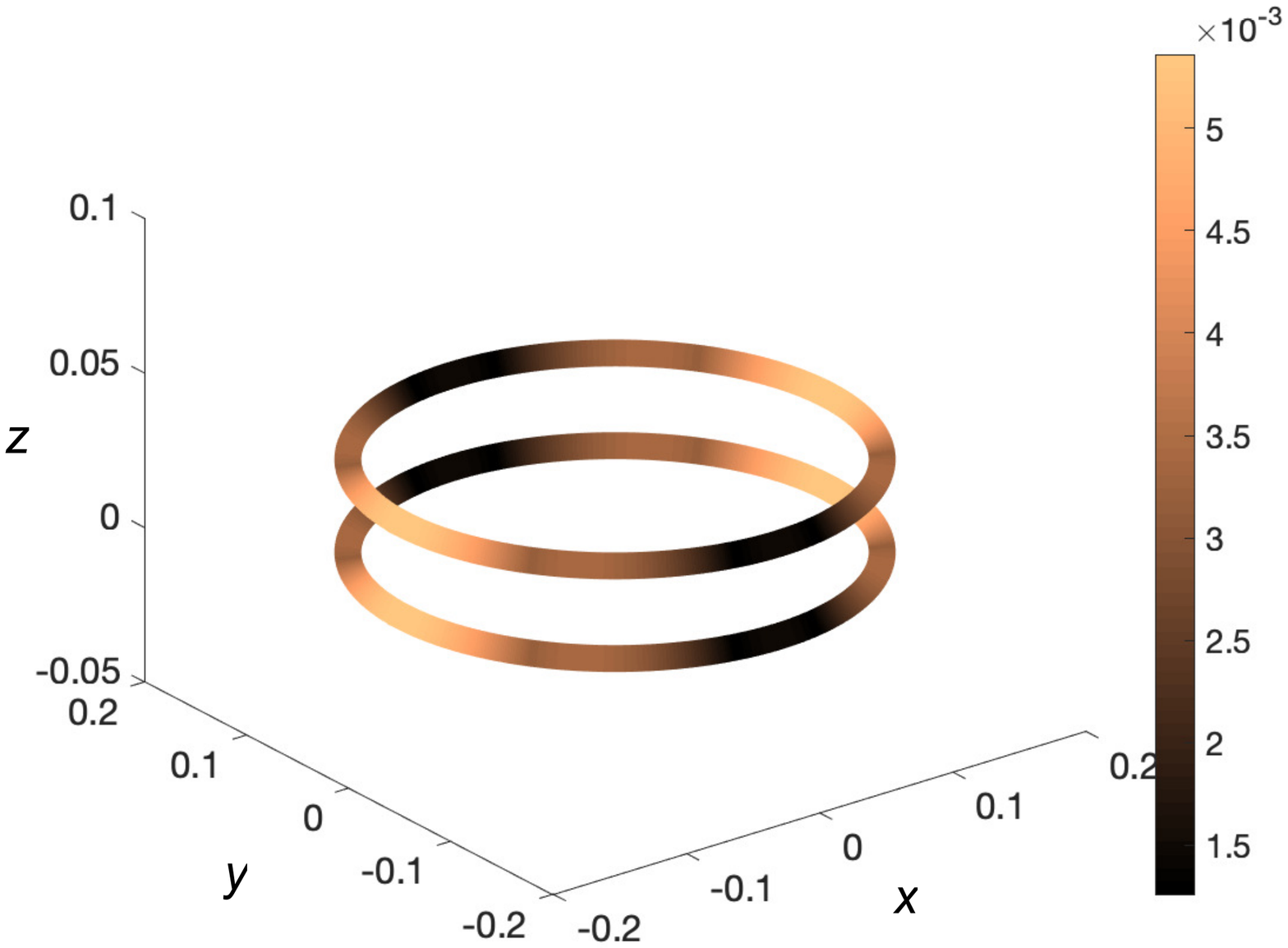} 
\includegraphics[width=0.49\linewidth]{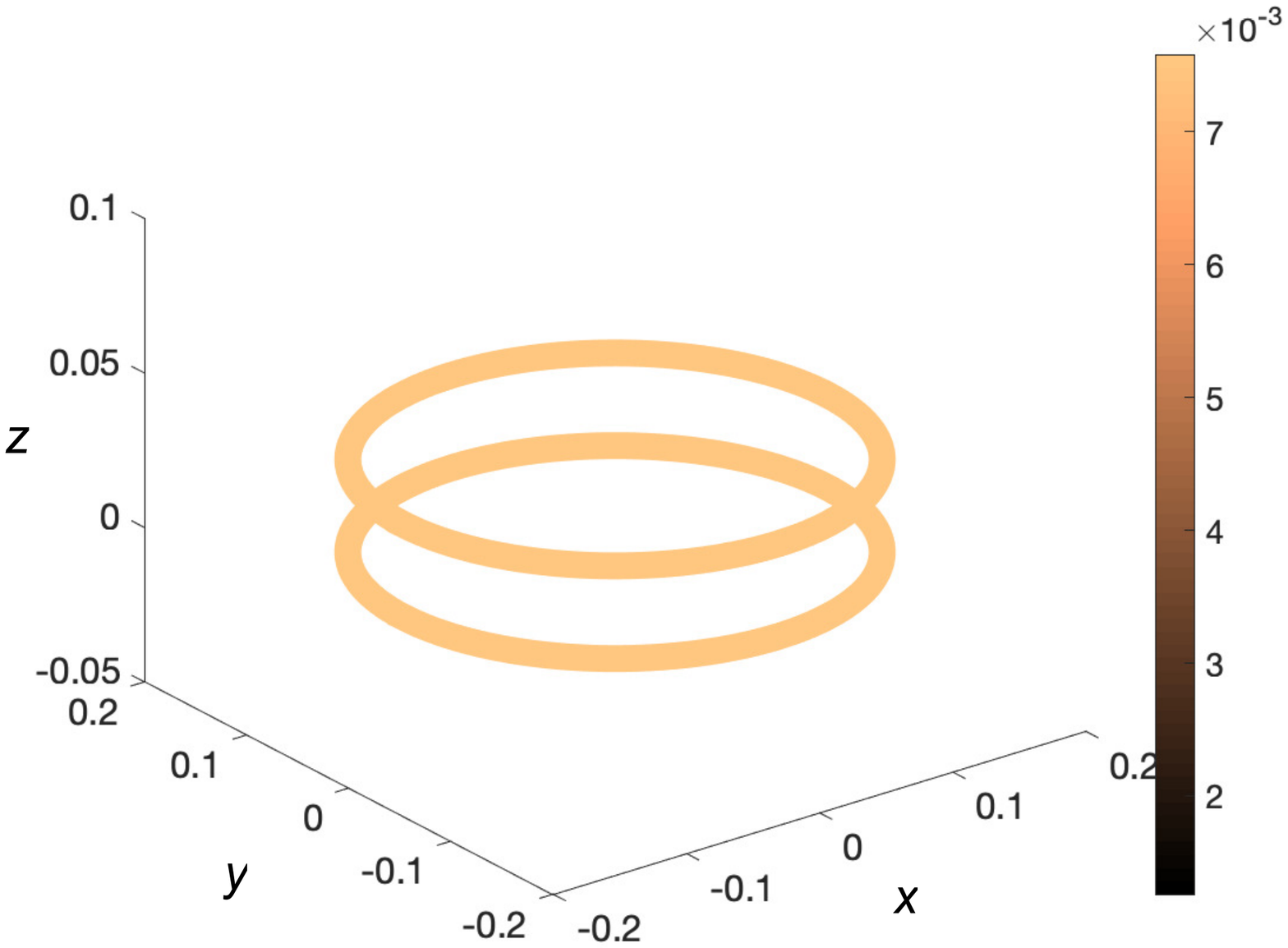} \\ 
 \caption{Plot of the centerline difference $\abs{\bu^\delta(\X(s))-\bu^\SB_{\rm C}(\X(s))}$ for two hydrodynamically interacting thin tori of radius $\epsilon=0.01$, stacked in the $z$-direction at a distance $0.01$ from each other. The left fibers are subject to the constant force $\bm{f}=\be_x$ and have a maximum difference of 0.0054. The fibers on the right are subject to the force 
$\bm{f}=\be_z$ and have a maximum difference of 0.0076.}
  \label{fig:2torus_only2}
\end{figure}

The main takeaway from Figures \ref{fig:torus2} -- \ref{fig:2torus_only2} is that the magnitude of the discrepancy between regularized and classical SBT noted in Theorem \ref{thm:powerlaw} is small and may not have a significant qualitative effect on fiber dynamics in practice, even when multiple fibers are closely interacting. 

\vspace{1cm}

\subsection*{Acknowledgements}
L.O. is supported by NSF postdoctoral fellowship DMS-2001959.


\bibliographystyle{abbrv}
\bibliography{MRS_bib}

\end{document}